\definecolor{darkPurple}{HTML}{3333B2}
\definecolor{forestGreen}{HTML}{337700}
\definecolor{shadecolor}{HTML}{CCCCCC}
\definecolor{darkGrey}{HTML}{4E4F86}
\definecolor{Crimson}{HTML}{DC143C}
\definecolor{transparentBlue}{HTML}{EBEBF8}
\newtheorem{theorem}{Theorem}
\newtheorem{definition}{Definition}
\providecommand{\ceil}[1]{\left \lceil #1 \right \rceil }
\providecommand{\floor}[1]{\left \lfloor #1 \right \rfloor }
\providecommand{\abs}[1]{\left \vert  #1 \right \vert }
\newenvironment{denseItemize}
{ \begin{itemize}
    \setlength{\itemsep}{0pt}
    \setlength{\parskip}{0pt}
    \setlength{\parsep}{0pt}     }
{ \end{itemize}                  }
\providecommand{\keywords}[1]{\textbf{\textit{Keywords---}}\small{\textit{#1}}}
\begin{document}
\title{On Continuous-space Embedding of Discrete-parameter Queueing Systems}
\date{}
\author{
	Neha Karanjkar \footnote{Post-doctoral Fellow, Robert Bosch Centre for Cyber-physical Systems, IISc Bangalore}
	\qquad
	Madhav P. Desai \footnote{Professor, Department of Electrical Engineering, IIT Bombay}
	\qquad
	Shalabh Bhatnagar \footnote{Professor and Chair, Department of Computer Science and Automation, IISc Bangalore}
	\\[2em]
}
\maketitle
\vspace{-1em}


\begin{abstract}
	
	Motivated by the problem of discrete-parameter simulation optimization
	(DPSO) of queueing systems, we consider the problem of embedding the
	discrete parameter space into a continuous one so that descent-based
	continuous-space methods could be directly applied for efficient optimization.
	We show that a randomization of the simulation model itself can be used
	to achieve such an embedding when the objective function is a
	long-run average measure.
	Unlike spatial interpolation, the computational cost of this embedding
	is independent of the number of parameters in the system, making the
	approach ideally suited to high-dimensional problems.
	We describe in detail the application of this technique 
	to discrete-time queues for embedding queue capacities, number of servers
	and server-delay parameters into continuous space and empirically
	show that the technique can produce smooth interpolations of the objective function.
	Through an optimization case-study of a queueing network with $10^7$ 
	design points, we demonstrate that existing continuous optimizers can be
	effectively applied over such an embedding to find good solutions.
	\\

\keywords{Queueing Systems, Discrete Parameter Simulation Optimization, Interpolation}
\end{abstract}

\newpage

\section{Introduction}
	
	\subsection{Motivation}
	The use of simulation is often necessary in the optimization of complex
	real-life queueing networks. Such queueing networks typically have
	discrete valued parameters such as queue capacities, the
	number of servers and server-delays.
	More concretely, consider a queuing system with a parameter set 
	$X=(x_1,x_2,\dots,x_n)$ where each $x_i$ can take 
	integer values between some fixed bounds.
	The set of all possible values that $X$ can take is the 
	$n$-dimensional discrete parameter space
	$\Omega_D$.
	Let $f:\Omega_D\rightarrow\mathbb{R}$ be a cost/performance 
	measure of the system that we wish to optimize.
	In many applications, $f$ may be composed of long-run average
	measures such as average throughput, average waiting
	time per customer or blocking probabilities. An analytical expression
	for $f$ is rarely available and given $X$, $f(X)$
	can only be measured using a simulation of the system.  The
	measurements are noisy as each simulation run has finite length.
	We are motivated by the problem of finding an $X^*\in\Omega_D$ that
	minimizes $f(X^*)$.  This is a discrete-parameter simulation
	optimization (DPSO) problem. 
	The problem is often difficult as the number of parameters
	can be very large and each function evaluation is computationally expensive.

	For small parameter sets, techniques such as
	Optimal Computing Budget Allocation (OCBA)\cite{Chen_OCBA} 
	have been effectively applied.
	When the number of parameters is large, an exhaustive evaluation
	of all design points is infeasible. The goal then is to find
	the best possible solution within a finite computational budget,
	rather than the global optimum.
	In such a case, randomized search techniques 
	such as simulated annealing\cite{Kirkpatrick_SimAnneal,Alkhamis_SimAnneal_DPSO} 
	and genetic algorithms\cite{Goldberg_GeneticAlgorithms}
	or heuristic-based local search methods such
	as Tabu search\cite{Glover_TabuSearch} have been employed.
	Discrete-space variants of continuous optimizers
	such as Simultaneous Perturbation Stochastic Approximation (SPSA)\cite{Spall1992} have also been proposed
	wherein the parameter estimate is projected back to the discrete space at each iteration
	\cite{Hill_Discrete_SPSA_2001,Bhatnagar_Discrete_SPSA_2005}.
	In all of the above methods, the objective function evaluation 
	is limited strictly to points in the original  
	discrete domain.

	If the discrete parameter space can be embedded into a 
	larger continuous space by using some form of interpolation, 
	the optimization problem can be solved by directly applying descent-based
	continuous-space methods.
	Let $\Omega_C \subset {\mathbb{R}}^n$ be the convex hull of $\Omega_D$. 
	An embedding of the discrete parameter space into a continuous one
	is essentially an interpolation (say $\hat{f}$) of $f$ defined over $\Omega_C$.
	If $\hat{f}$ can be constructed in a computationally efficient manner
	and has a suitable {\em structure}
	(that is, $\hat{f}$ is continuous, piece-wise smooth, has few local minima) then
	continuous optimizers applied directly over $\hat{f}$
	can be expected to perform well. This is because gradient
	information can be utilized at each step to converge
	rapidly to local minima. Random multi-starts can be
	used in the case of non-convex functions. Most importantly,
	such methods often scale well with the number of parameters
	in comparison to enumerative approaches.
	However, the solution needs to be 
	projected back to the original discrete space carefully.
	It is to be noted that an embedding-based approach is essentially different
	from methods where the parameter vector is rounded/truncated
	to integer points at each step. In the former case,
	a continuous-space offers more pathways to reach the 
	solution aggressively.

	The continuous-space embedding approach is suited to
	simulation optimization of queueing, inventory or manufacturing systems
	where most discrete parameters 
	have an integer ordering and the objective function
	has some structure (unlike combinatorial
	problems).
	Such an approach has been recently reported in \cite{Lim2012} 
	and \cite{Wang2008} for the optimization of queueing and inventory
	systems where spatial interpolation 
	(piecewise-local interpolation over a simplex) 
	was used to obtain a continuous-space embedding
	of the objective function.

	\subsection{Problem Statement}
	Motivated by the embedding-based approach to the DPSO problem,
	we consider the problem of obtaining a continuous-space 
	embedding in a {\em computationally efficient} manner.
	Consider a system with the parameter vector $X=(x_1,x_2,\dots,x_n)$
	where each $x_i$ is discrete valued and 
	belongs to the set $\Omega_{D_i}=\{x \mid x\in\mathbb{N}, x^{min}_i \leq x \leq x^{max}_i\}$.
 	Let $\Omega_{C_i}$ denote the range $[x^{min}_i, x^{max}_i]$.
	The set of all possible values that $X$ can take is $\Omega_D = \prod_{i=1}^{n}{\Omega_{D_i}}$
	and $\Omega_C$ is the convex hull of $\Omega_D$.

	We consider the problem of finding $\hat{f}:\Omega_C\rightarrow \mathbb{R}$
	such that $\hat{f}(Y)=f(Y)$ when $Y\in\Omega_D$ and $\hat{f}$ is continuous over $\Omega_C$.
	Further, it is desirable that $\hat{f}$ is 
	continuously differentiable  over $\Omega_C$
	so that gradient-based methods could be applied.
	Spatial interpolation as a means of obtaining 
	$\hat{f}$ is computationally expensive.
	Given $Y\in\Omega_C$ and a set of points
	$X^1, X^2, \dots, X^p \in \Omega_D$ in the neighbourhood of $Y$,
	the interpolation $\hat{f}(Y)$ can be found as:
	$$\hat{f}(Y) = \sum_{k=1}^{p}{a_k f(X^k)}$$ 
	where $a_1,\dots a_p$ are interpolation coefficients. 
	Here $p$ simulations are required to estimate the interpolated value.
	For linear interpolation, $p=2^n$ and for piece-wise simplex interpolation, $p \geq n+1$.
	Thus the computational effort required for the embedding 
	grows rapidly with the dimensionality of the parameter space.

	Instead, we propose an embedding technique which is based on a
	randomization of the simulation model itself.  The interpolated value
	can be computed using a {\em single} simulation of the randomized model,
	irrespective of the dimensionality of the parameter space. 
	The basic idea behind the embedding technique  is described next.
	
	\subsection{Embedding via Randomization}

	Let $Y=(y_1,y_2,\dots,y_n)$ be a point in $\Omega_C$ at which we wish to compute the 
	interpolated value $\hat{f}(Y)$. Thus the $i^{th}$ parameter needs to be assigned
	a value $y_i\in\mathbb{R}$.
	To compute the interpolation, we construct a randomized version of the model
	where each parameter $i$ is perturbed periodically 
	(for example, at the beginning of each time-slot)
	and assigned values of a discrete random variable which we denote as $\gamma_i(y_i)$.
	The random variable $\gamma_i$ is chosen in such a way that
	its moments can be made to vary continuously with respect to the parameter $y_i$,
	and $\gamma_i(y_i)=y_i$ with probability $1$ whenever $y_i\in\Omega_{D_i}$.
	The simplest example of such a random variable is:
	\begin{equation}\label{eq:gamma_example}
	\begin{aligned}
	\gamma(v) &=  \ \left\{ 
	\begin{array}{ll}
	\floor{v} & {\rm with\ probability\ }\alpha(v) \\
	\ceil{v}  & {\rm with\ probability\ } 1-\alpha(v)\\
	\end{array}\right.\\
	{\rm where\ } \alpha(v) &= \ceil{v}-v
	\end{aligned}
	\end{equation}
	Let $\hat{f}$ be the long-run average measure
	obtained by simulating such a randomized model. 
	$\hat{f}$ is now a function of $Y$. 
	Further, $\hat{f}(Y)=f(Y)$ when $Y\in\Omega_D$ by definition.
	Thus $\hat{f}$ is an interpolation of $f$.
	As each parameter in the model can be embedded independently, the
	interpolated value can be computed using a single simulation of the
	randomized model, irrespective of the number of parameters. In essence,
	the interpolation technique relies on averaging in {\em time} in
	contrast to spatial interpolation methods which perform an averaging in
	space. 
	
	Such a technique can be applied in principle, to several types of 
	discrete-event systems where the objective is a long-run average.
	A randomization-based approach was first reported in \cite{ghaza_bottleneck}
	for sensitivity measurement in 	VLSI systems 
	(for producing small real-valued perturbations to discrete-valued
	parameters) and in \cite{KaranjkarMasCOTS} for the 
	design optimization of multi-core system architectures. Both these
	works demonstrated a specific application of the technique,
	however, without a theoretical justification.
	A theoretical basis for the embedding technique 
	was introduced in \cite{T-ASE} in the context of two specific 
	algorithms for solving the DPSO problem. This work proposed
	variants of two continuous optimizers (SPSA and Smoothed Functional (SF) algorithm)
	wherein the parameter estimate at each iteration is projected
	back to the discrete space probabilistically (rather than in a deterministic manner)
	and showed that this effectively produces a continuous embedding of the underlying
	discrete-parameter process. The work focused on the optimization
	algorithms and the embedding technique itself was not explored in depth.

	The focus of the current work is on the randomization-based 
	embedding technique itself and its application to queueing systems
	(rather than a specific optimization method). In fact, once a 
	suitable embedding has been obtained, a rich set of existing
	continuous optimizers become directly applicable 
	to the original DPSO problem.
	The main contributions of this paper are listed below.

	\subsection{Our Contributions}
	
	We present a general scheme for randomization of 
	discrete-valued parameters in a simulation model and show 
	that such a randomization can indeed produce continuous interpolations
	of the long-run average objective.
	The proof (presented in Section \ref{sec:AGeneralRandomizaionScheme})
	uses a result from Markov chain perturbation theory\cite{Schweitzer}
	and builds on a proof in \cite{T-ASE} by relaxing some of its assumptions.
	The interpolation is not unique and can be tuned
	by varying the shape of the weights (probabilities) used during the randomization.
	We refer to these weights as {\em Stochastic Interpolation Coefficients}.
	In Section \ref{sec:GeneratingAlphas} 
	we present a parameterized template for generating these 
	stochastic interpolation coefficients.
	We then discuss the desirable properties of a
	good interpolation and demonstrate how the template parameters
	can be varied to improve the interpolation.

	We explore in detail the application of the embedding
	technique to discrete-time queues.
	Such queues are of importance in many applications 
	such as communication networks, manufacturing lines
	and transportation systems\cite{Alfa2015,bruneel1993}.
	Through several concrete examples, we demonstrate the
	continuous-space embedding of 
	queue-capacity, number of servers and server-delay parameters 
	and observe that the technique can produce smooth
	interpolations of the objective function.
	We present detailed simulation results for these discrete-time queues
	in Section \ref{sec:ApplicationToQueues}.
	The simulation models and scripts used for all
	results reported in this paper are available online (see \cite{Embedding_GitHub}).

	As a demonstration of the utility of this embedding technique,
	we present an optimization case-study of a queueing network
	with $7$ integer parameters and $10^7$ design points.
	We observe that a randomization of the simulation model
	produces smooth embeddings of the objective function
	with a low computational overhead. Two well-established
	continuous-space optimizers (COBYLA\cite{Powell1994} and SPSA\cite{Spall1992}) 
	applied directly over the embedding
	perform well and find good solutions in comparison to a direct 
	discrete-space search. We present the observations from this optimization case-study in 
	Section \ref{sec:OptimizationCaseStudy}. In summary, our work
	establishes a computationally efficient technique for embedding 
	discrete-parameter queueing systems into continuous-space, enabling direct application
	of continuous optimizers to solve the DPSO problem.

\section{A General Randomization Scheme}\label{sec:AGeneralRandomizaionScheme}
	
	We describe the randomization scheme with respect to a single parameter.
	All parameters in the model can be embedded in a similar manner simultaneously
	and independently of each other.
	Consider the $i^{th}$ parameter in the model which can take integer values 
	from the finite set $\Omega_{D_i}= \{x_i^1, x_i^2,\dots,x_i^p\}$.
	Let $\Omega_{C_i}\subset\mathbb{R}$ denote the range $[x_i^1,x_i^p]$.

	To embed the $i^{th}$ parameter into a continuous space and assign to it 
	the value $y_i\in\Omega_{C_i}$,
	we construct a randomized version of the model where the parameter is perturbed
	at multiple time-instants within a single simulation trajectory
	and assigned values of the discrete random variable $\gamma_i(y_i)$.
	The random variable $\gamma_i$ can have 
	a multi-point distribution taking values from the set $\Omega_{D i}$.
	We choose a set of functions $\alpha_i^1, \alpha_i^2, \dots,\alpha_i^p$ which map
	values from the domain $\Omega_{C_i}$ to the range $[0,1]$ such that:
	\begin{itemize}
	
		\item $\sum_{k=1}^{p}{\alpha_i^k(y)}=1 \ \ \ \  \forall y\in\Omega_{C_i}$,
		\hfill\refstepcounter{equation}\textup{(\theequation)}
		\label{eq:alpha_condition_1}

		\item $ \alpha_i^k(y)=1 {\rm\ when\ } y=x_i^k$ for $k\in\{1,2,\dots,p\}$, 
		\hfill\refstepcounter{equation}\textup{(\theequation)}	
		\label{eq:alpha_condition_2}

		\item  $\alpha_i^1,\dots,\alpha_i^p$ are 
		continuous at all points in $\Omega_{C_i}$.
		\hfill\refstepcounter{equation}\textup{(\theequation)}	
		\label{eq:alpha_continuous}
	\end{itemize}
	The random variable $\gamma_i(y)$ then has the distribution: 
	\begin{equation}
	\label{eq:gamma}
	\begin{aligned}
	\gamma_i(y) = x_i^k {\rm\ with\ probability\ } \alpha_i^k(y) {\rm\ \ \ \ for\ } k \in \{1,2,\dots,p\}.
	\end{aligned}
	\end{equation}
	At a given point $y=y_i$ the values $\alpha^1_i(y_i), \alpha^2_i(y_i),\dots, \alpha^p_i(y_i)$
	serve as stochastic interpolation coefficients.
	\begin{figure}[hbt]
	\begin{center}
		\centering
		\includegraphics[width=0.45\textwidth]{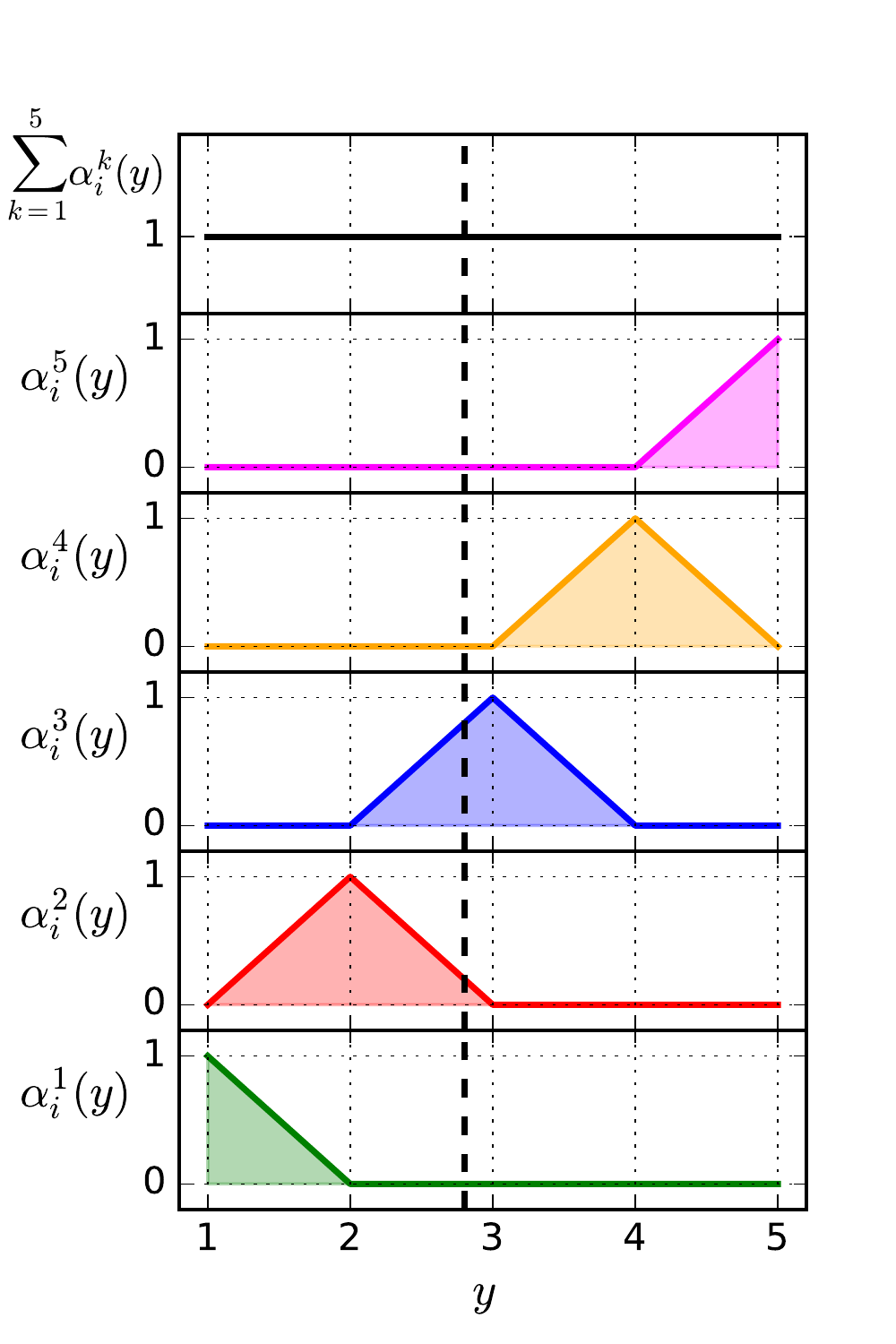}
		\vspace{-0.2em}
		\caption{An example for the set of stochastic interpolation coefficient functions 
		over the domain $\Omega_{D_i}=\{1,2,\dots,5\}$.
		At a given point $y=y_i$, $\alpha_i^k(y_i)$ represents the probability 
		with which the random variable $\gamma_i$ takes the value $x_i^k$.
		For example, at $y=2.8$ indicated by the dashed line,
		$\alpha_i^2 = 0.2$ and $\alpha_i^3 = 0.8$ whereas
		$\alpha_i^k = 0$ for $k \notin \{2,3\}$.} 
		\label{fig:Alphas_simple}
	\end{center}
	\end{figure}
	There are an infinite number of choices for the set of functions $\{\alpha_i^1, \alpha_i^2, \dots, \alpha_i^p\}$
	that satisfy conditions (\ref{eq:alpha_condition_1}) to (\ref{eq:alpha_continuous}).
	We illustrate one example of such a set in Figure \ref{fig:Alphas_simple}.
	In general, the shape of these functions will affect the resulting interpolation $\hat{f}$.
	In the following paragraphs we show that such a randomization can indeed produce
	continuous interpolations of the long-run average measure.
	The result is similar to \cite[Lemma 3]{T-ASE},
	except that the assumption about the ergodicity of the constituent Markov chains
	has been relaxed. This makes the analysis applicable to a 
	wider set of parameters and systems (including chains with 
	transient or periodic states such as those described in Section \ref{sec:ApplicationToQueues}).
	
	\subsection{Analysis}
	
	Let $X^1, X^2,\dots,X^m$ be points in the $n$-dimensional discrete parameter space $\Omega_D$.
	At each parameter point $X^j$ we assume that
	the behavior of the system can be described as a stationary Markov chain
	with a finite state-space $\mathcal{S}$ and a transition probability matrix $P^j$.
	Such a chain will have a unique stationary distribution
	(that is, a unique value of the distribution $\pi$ which satisfies $\pi P^j = \pi$)
	unless it contains two or more closed communicating classes.
	We assume that at each point $ X^j\in \Omega_D$ the corresponding chain  
	contains exactly one closed communicating class and therefore
	has a unique stationary distribution $\pi^j$.
	The chain is not required to be ergodic and 
	may contain periodic states and/or some transient states.
	Further, we assume that the chains at $X^1, \dots,X^m$ all share a common state-space $S$.
	This assumption holds when the model behavior is well-defined 
	for dynamically changing parameter-values,
	as demonstrated in Section \ref{sec:ApplicationToQueues}.
	Note that it is permissible for the subset of states forming a closed communicating class 
	at points $X^i$ and $X^j$ to be different or altogether non-overlapping for $i\neq j$.

	Let $c:\mathcal{S}\rightarrow\mathbb{R}$ be a cost function that assigns a fixed
	cost to every occurrence of a state in the Markov chain. Let $\pi^j_s$
	denote the probability of occurrence of a state $s\in\mathcal{S}$
	under the distribution $\pi^j$.
	The long-run average cost $f$ at the point $X^j$
	can then be defined in terms of the stationary distribution as follows:
	\begin{equation}\label{eq:long_run_avg_cost}
	f(X^j) = \sum_{s\in \mathcal{S}}{\pi^j_s c(s)}.
	\end{equation}
	Now consider the randomized model at $Y=(y_1, y_2, \dots, y_n) \in \Omega_C$ 
	where the $i^{th}$ parameter in the model 
	is assigned values of the random variable $\gamma_i(y_i)$.
	Here $\gamma_1,\gamma_2,\dots,\gamma_n$ are independent random variables
	with the distribution given by Equation (\ref{eq:gamma}).
	We assume that all parameters in the model are perturbed at 
	identical time instants.
	Let all of the stochastic interpolation coefficient functions 
	$\{\alpha_i^k:\Omega_{C_i}\rightarrow[0,1]\mid i\in\{1,2,\dots,n\}, k\in\{1,2,\dots,\abs{\Omega_{D_i}}\}\}$
	be of differentiability class $C^K$ (that is, have $K$ continuous derivatives, 
	where $K$ is some non-negative integer). 
	Let $\hat{f}:\Omega_C\rightarrow\mathbb{R}$
	be the corresponding long-run average measure of the randomized model.

	\begin{theorem}
	$\hat{f}$ is a $C^K$ interpolation of $f$.
	\end{theorem}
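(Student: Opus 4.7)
The plan is to proceed in three steps: (i) show that the randomized simulation at $Y \in \Omega_C$ is a Markov chain whose transition matrix $\hat{P}(Y)$ depends $C^K$-smoothly on $Y$, (ii) verify that $\hat{P}(Y)$ has a unique stationary distribution $\hat{\pi}(Y)$ for every $Y$, and (iii) invoke the Schweitzer perturbation result to conclude that $\hat{\pi}(Y)$, and hence $\hat{f}(Y)$, inherits the $C^K$ regularity of $\hat{P}$.

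For the first step, I would use the independence of the per-step perturbations $\gamma_1(y_1),\ldots,\gamma_n(y_n)$. Conditioning on the joint realization of the perturbations selects one discrete parameter vector $X^j \in \Omega_D$ with probability $w^j(Y) := \prod_{i=1}^{n} \alpha_i^{k_i(j)}(y_i)$, where $k_i(j)$ indexes the $i$-th coordinate of $X^j$. Since all parameter points share the state space $\mathcal{S}$, the randomized dynamics on $\mathcal{S}$ is itself a time-homogeneous Markov chain with
$$\hat{P}(Y) \;=\; \sum_{j=1}^{m} w^j(Y)\, P^j.$$
Condition (\ref{eq:alpha_condition_1}) ensures $\sum_j w^j(Y)=1$, so $\hat{P}(Y)$ is stochastic; condition (\ref{eq:alpha_condition_2}) implies $\hat{P}(X^j)=P^j$ (so $\hat{f}(X^j)=f(X^j)$); and the $C^K$ hypothesis on the $\alpha_i^k$ makes each $w^j$, and thus each entry of $\hat{P}(Y)$, a $C^K$ function of $Y$.

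The hardest step is likely the second one: establishing that $\hat{P}(Y)$ has a unique closed communicating class for every $Y \in \Omega_C$, so that the stationary distribution is unique. At the discrete corners this follows by hypothesis, but for interior $Y$ the closed classes of the various $P^j$'s may be disjoint and the convex combination could in principle create multiple closed classes. My plan is to argue, using the reachability graph of $\hat{P}(Y)$, that any transition permitted by some $P^j$ with $w^j(Y)>0$ is permitted by $\hat{P}(Y)$, so the reachability relation under $\hat{P}(Y)$ is at least as rich as under any individual $P^j$; this enlargement can only merge closed classes, never split them. Combined with the per-chain uniqueness assumption and the fact that the weights $w^j(Y)$ are all positive in the interior of $\Omega_C$, this should yield a single closed class and hence a unique $\hat{\pi}(Y)$.

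With these in hand, the third step is essentially plug-and-play. The perturbation theory of \cite{Schweitzer} expresses the stationary distribution of a finite Markov chain with a single closed class as a rational (hence $C^\infty$) function of the entries of its transition matrix, with the group-inverse formula providing explicit derivatives. Composing this map with the $C^K$ map $Y\mapsto \hat{P}(Y)$ shows that $Y\mapsto \hat{\pi}(Y)$ is $C^K$ on $\Omega_C$. The long-run average cost
$$\hat{f}(Y) \;=\; \sum_{s \in \mathcal{S}} \hat{\pi}_s(Y)\, c(s)$$
is then a finite linear combination of $C^K$ functions, so $\hat{f} \in C^K(\Omega_C)$; together with $\hat{f}(X^j)=f(X^j)$ on $\Omega_D$, this establishes that $\hat{f}$ is a $C^K$ interpolation of $f$.
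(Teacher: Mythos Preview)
Your proposal is correct and mirrors the paper's proof: write $\hat{P}(Y)=\sum_j w^j(Y)P^j$ (the paper's $\beta^j$), deduce $C^K$ dependence on $Y$ from that of the $\alpha_i^k$, and apply Schweitzer's perturbation result to obtain a unique, $C^K$ stationary distribution $\hat{\pi}(Y)$ and hence a $C^K$ interpolation $\hat{f}$. One small correction to your step~(ii): the weights $w^j(Y)$ are \emph{not} all positive in the interior of $\Omega_C$ (with a finite stencil most $\alpha_i^k$ vanish at any given $y_i$, so most $w^j$ vanish too), but your reachability argument in fact only needs a single $w^{j_0}(Y)>0$, which holds automatically since $\sum_j w^j(Y)=1$; the paper sidesteps this step by quoting Schweitzer's two-term form $(1-\lambda)P^A+\lambda P^B$, which already packages the single-closed-class conclusion together with the smoothness.
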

	
	\begin{proof}
	The $n$-dimensional vector of parameter values at 
	any instant of time is itself a random variable $\Gamma$
	whose distribution is a function of $Y$ as follows:
	\begin{equation}
	\label{eq:Gamma}
	\begin{aligned}
	\Gamma(Y) = X^j {\rm\ with\ probability\ } \beta^j(Y) {\rm\ \ \ \ for\ } j \in \{1,2,\dots,m\}.
	\end{aligned}
	\end{equation}
	Let $X^j = (x^j_1, x^j_2, \dots, x^j_n)$ be a point in $\Omega_D$ and
	let $I_i(x)$ denote the index of element $x$ in the set $\Omega_{D_i}$.
	The coefficients $\beta^j$ can then be obtained as:
	\begin{equation}\label{eq:beta}
	\begin{aligned}
	\beta^j(Y) &= \prod_{i=1}^{n}{\mathbb{P}(\gamma_i(y_i)=x^j_i)} &= \prod_{i=1}^{n}{\alpha_i^{I_i(x^j_i)}(y_i) }.
	\end{aligned}
	\end{equation}
	From equations (\ref{eq:beta}) and (\ref{eq:alpha_continuous})
	it follows that the functions $\beta^1,\beta^2,\dots,\beta^m$ which
	map values from the domain $\Omega_C$ to the range $[0,1]$ 
	also satisfy the following conditions:
	\begin{itemize}
	
		\item $\sum_{j=1}^{m}{\beta^j(Y)}=1 \ \ \ \  \forall Y\in\Omega_{C}$,
		\hfill\refstepcounter{equation}\textup{(\theequation)}
		\label{eq:beta_condition_1}

		\item $ \beta^j(Y)=1 {\rm\ when\ } Y=X^j$ for $j\in\{1,2,\dots,m\}$ ,
		\hfill\refstepcounter{equation}\textup{(\theequation)}	
		\label{eq:beta_condition_2}

		\item  $\beta^1, \dots, \beta^m$ are 
		continuous at all points in $\Omega_{C}$.
		\hfill\refstepcounter{equation}\textup{(\theequation)}\label{eq:beta_continuous} 
		\label{eq:beta_condition_3}
	\end{itemize}
	Let $P(Y)$ denote the transition probability matrix of the
	randomized model. We choose the time instants at which to perturb the 
	parameter values in such a way that $P(Y)$ is given by
	\begin{equation}\label{eq:PY}
	P(Y) = \sum_{j=1}^{m}{\beta^j(Y) P^j}.
	\end{equation}
	In a discrete-time system, this can be achieved in a straightforward
	manner by perturbing the parameter values at the beginning of each time-slot.
	From (\ref{eq:PY}) and (\ref{eq:beta_continuous}) it follows 
	that $P(Y)$ is continuous with respect to $Y$.
	We now refer to a result from \cite[Section~6]{Schweitzer}
	which states that:{ \em If $P^A$ is the transition probability matrix
	of a finite Markov chain containing a single irreducible subset of states
	(a single closed communicating class), then for an arbitrary
	stochastic matrix $P^B$ with the same state-space as $P^A$, 
	the randomized stationary Markov chain with transition probability
	matrix 
	$$P(\lambda)=(1-\lambda)P^A+\lambda P^B\ \ \ \ 0\leq\lambda<1$$
	will also possess a single irreducible subset of states. Further, $P(\lambda)$ has a 
	unique stationary distribution $\pi(\lambda)$ which is infinitely differentiable
	with respect to $\lambda$ for $\lambda\in[0,1)$.}

	In Equation (\ref{eq:PY}) $P^1, P^2, \dots, P^m$ each have a single
	irreducible set of states. Therefore the stationary distribution
	$\pi(Y)$ corresponding to $P(Y)$ exists and is infinitely differentiable
	with respect to the coefficients $\beta^1(Y),\dots,\beta^m(Y)$
	and $K$-times continuously differentiable ($C^K$) with respect to $Y$.
	Let $\pi_s(Y)$ denote the probability of occurrence of a state $s$
	under the distribution $\pi(Y)$.
	The long-run average cost $\hat{f}$ in the randomized model is given by:
	$$\hat{f}(Y) = \sum_{s\in \mathcal{S}}{\pi_s(Y)c(s)}$$
	The function $\hat{f}$ is also $C^K$ with respect to $Y$.
	From equations (\ref{eq:Gamma}) and (\ref{eq:beta_condition_2}) we have $\Gamma(Y) = X^j$ 
	with probability $1$ when $Y = X^j$ for $j \in \{1,2,\dots,m\}$.
	Thus $\hat{f}(Y) = f(Y)$ whenever $Y\in\Omega_D$.
	Therefore $\hat{f}$ is a $C^K$ interpolation of $f$.
	\end{proof}

	Thus we have shown that a randomization of the simulation model 
	can be used as a means of producing continuous interpolations of the
	long-run average measure $f$ under the listed assumptions.
	Note that for $\hat{f}$ to be of class $C^K$, it is sufficient but not
	necessary for the stochastic interpolation coefficients $\alpha_i^1,\alpha_i^2,\dots$
	to be $C^K$ functions. For instance, it may be possible to obtain 
	a continuously differentiable interpolation $\hat{f}$
	using coefficients that are not differentiable at integer points.
	%

	\section{Generating the Stochastic Interpolation Coefficients}\label{sec:GeneratingAlphas}

	Consider the set of functions
	$\{\alpha_i^k:\Omega_{C_i}\rightarrow[0,1]\mid k\in\{1,2,\dots,p\}\}$
	that satisfy conditions (\ref{eq:alpha_condition_1}) to (\ref{eq:alpha_continuous}).
	While an infinite number of choices exist for such a set of functions, we 
	present one possible template that can be used to generate a family of such sets.
	The template is useful as the final interpolation curves
	can be tuned by varying the parameters of this template.

	For a given point $y\in\Omega_{C_i}$ we refer to the {\em stencil} around point $y$,
	denoted $\mathbb{S}(y)$ as the set of points in $\Omega_{D_i}$
	that are chosen as basis to represent $y$.
	In other words $\mathbb{S}(y)$ is the set of values that the random
	variable $\gamma_i(y)$ would take with non-zero probabilities.
	For example, at $y=6.8$ 
	a symmetric stencil of size $2$ would be the set $\{6, 7\}$.
	As a concrete example for the case where $\Omega_{D_i}$ consists of
	consecutive integers, a symmetric stencil 
	of size $\le2N$ (where $N$ is a natural number) 
	around any point $y\in\Omega_{C_i}$ 
	can be defined as follows:
	\begin{equation}\label{eq:stencil}
	\begin{aligned}
	\mathbb{S}(y)=
	\begin{cases}
	\{y\} & \text{if } y \in \Omega_{D_i}\\
	\{ (\floor{y}-N+1),\dots,\floor{y},\ceil{y},\dots,(\ceil{y}+N-1)\}\cap\Omega_{D_i} & \text{otherwise.}
	\end{cases}
	\end{aligned}
	\end{equation}
	Note that the stencil size can be lower than $2N$ 
	for points near the boundary of $\Omega_{D_i}$.
	Now corresponding to each point $x^k \in\Omega_{D_i}$
	we define a function $L^k(y)$ such that $L^k(y)$ is non-negative,
	continuous over $\Omega_{C_i}$ and its value approaches $0$ as $y$ 
	approaches points in its stencil $\mathbb{S}(y)$ other than $x^k$.
	For example:
	\begin{equation}
	\begin{aligned}
	L^k(y) = \prod_{\substack{j \in\mathbb{S}(y)\\  j \neq x^k}}{\abs{y - j}}.
	\end{aligned}
	\end{equation}
	The stochastic interpolation coefficients $\alpha_i^k(y)$ for $k\in\{1,2,\dots,p\}$ can now be defined as:
	\begin{equation}\label{eq:alpha_template}
	\begin{aligned}
	\alpha_i^k(y)=
	\begin{cases}
	0,					              		& \text{if }  x^k\notin\mathbb{S}(y)\\
	\dfrac{L^k(y)}{\sum_{j\in\mathbb{S}(y)}{L^j(y)}}              & \text{if } x^k\in\mathbb{S}(y).
	\end{cases}
	\end{aligned}
	\end{equation}
	Since the values of $L^k(y)$ are non-negative and a normalization 
	is performed in Equation (\ref{eq:alpha_template}) it can be seen
	that the coefficients $\alpha_i^k(y)$ will always lie in the range $[0,1]$
	and satisfy conditions (\ref{eq:alpha_condition_1}) to (\ref{eq:alpha_continuous}).
	To obtain a more general template, let $s$ and $r$ be fixed constants such that 
	$s\in \mathbb{R}{\setminus\{0\}}$ and $r\in\mathbb{R}_{>0}$.
	Let $m$ be the smallest element in $\mathbb{S}(y)$. Then,
	\begin{equation}\label{eq:L_template}
	\begin{aligned}
	L^k(y) = \prod_{\substack{j\in\mathbb{S}(y)\\  j \neq x^k}}{ {\abs{(y-m+1)^s-(j-m+1)^s}}^r}.
	\end{aligned}
	\end{equation}
	The coefficients $\alpha_i^k(y)$ can be obtained as given by Equation (\ref{eq:alpha_template})
	using the values of $L^k(y)$ computed using Equation (\ref{eq:L_template}).
	\begin{figure}[ph]
	\begin{center}
		\begin{subfigure}{0.45\textwidth}
		\centering
		\caption{stencil size $=2$, $s=1,r=1$}
		\vspace{-0.3em}
		\includegraphics[width=0.8\textwidth]{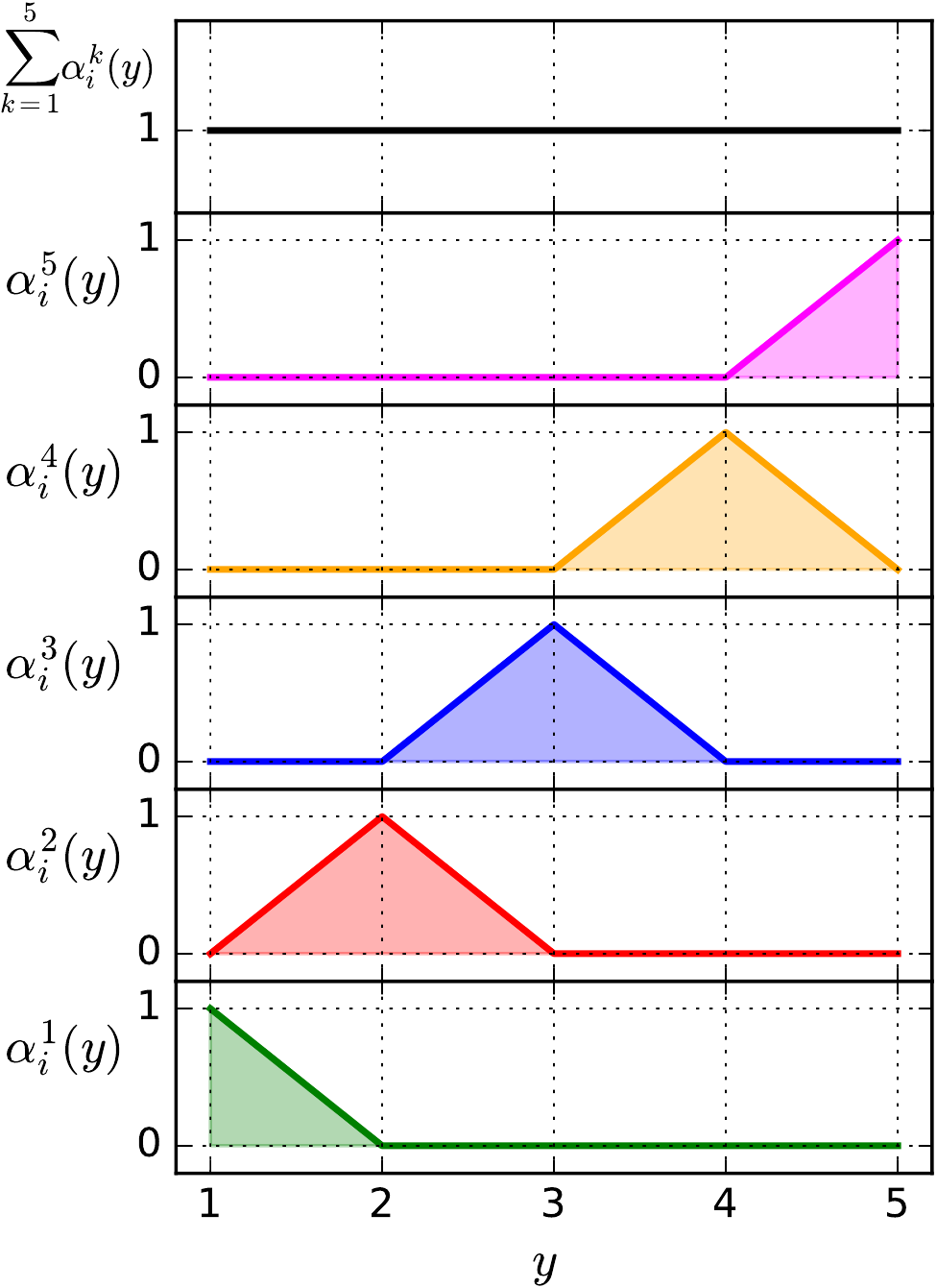}
		\label{fig:alphas_1}
		\end{subfigure}\hspace{2em}
		\begin{subfigure}{0.45\textwidth}
		\centering
		\caption{stencil size $=4$, $s=1,r=1$}
		\vspace{-0.3em}
		\includegraphics[width=0.8\textwidth]{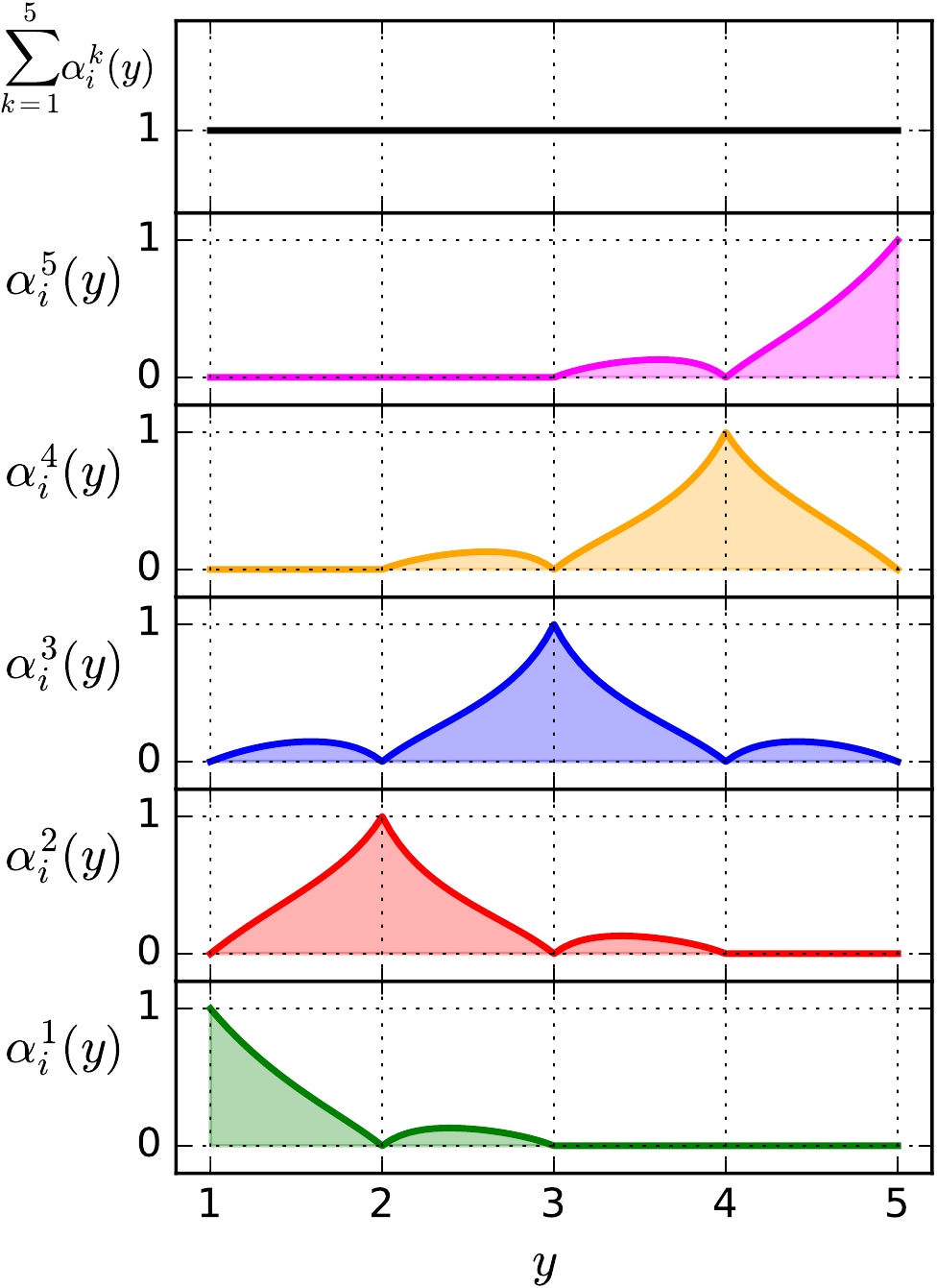}
		\label{fig:alphas_2}
		\end{subfigure}
		
		\vspace{0.2em}
		
		\begin{subfigure}{0.45\textwidth}
		\centering
		\caption{stencil size $=2$, $s=1$, $r=2$}
		\vspace{-0.3em}
		\includegraphics[width=0.8\textwidth]{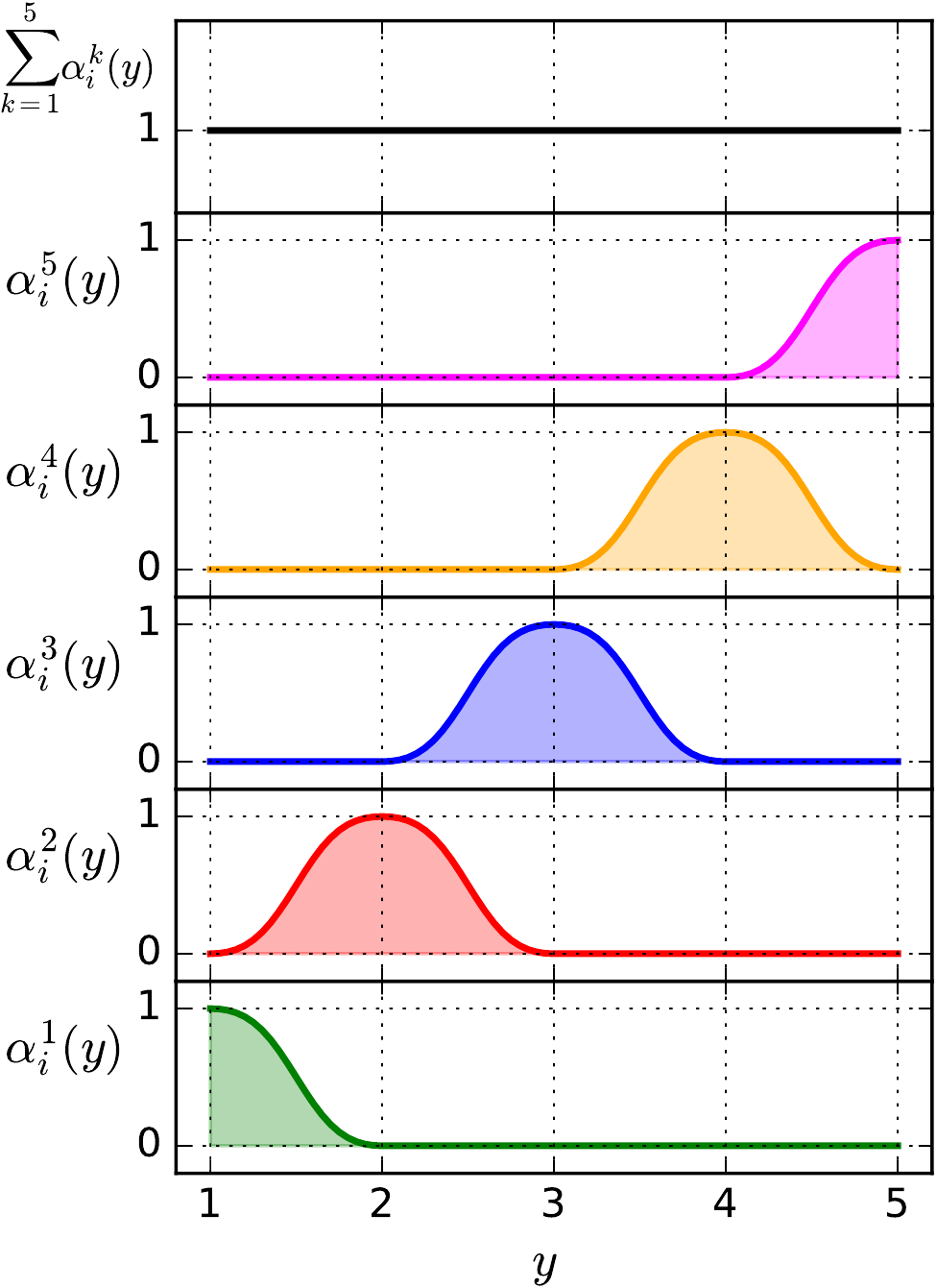}
		\label{fig:alphas_3}
		\end{subfigure} \hspace{2em}
		\begin{subfigure}{0.45\textwidth}
		\centering
		\caption{stencil size $=2$, $s=1$, $r=0.5$}
		\vspace{-0.3em}
		\includegraphics[width=0.8\textwidth]{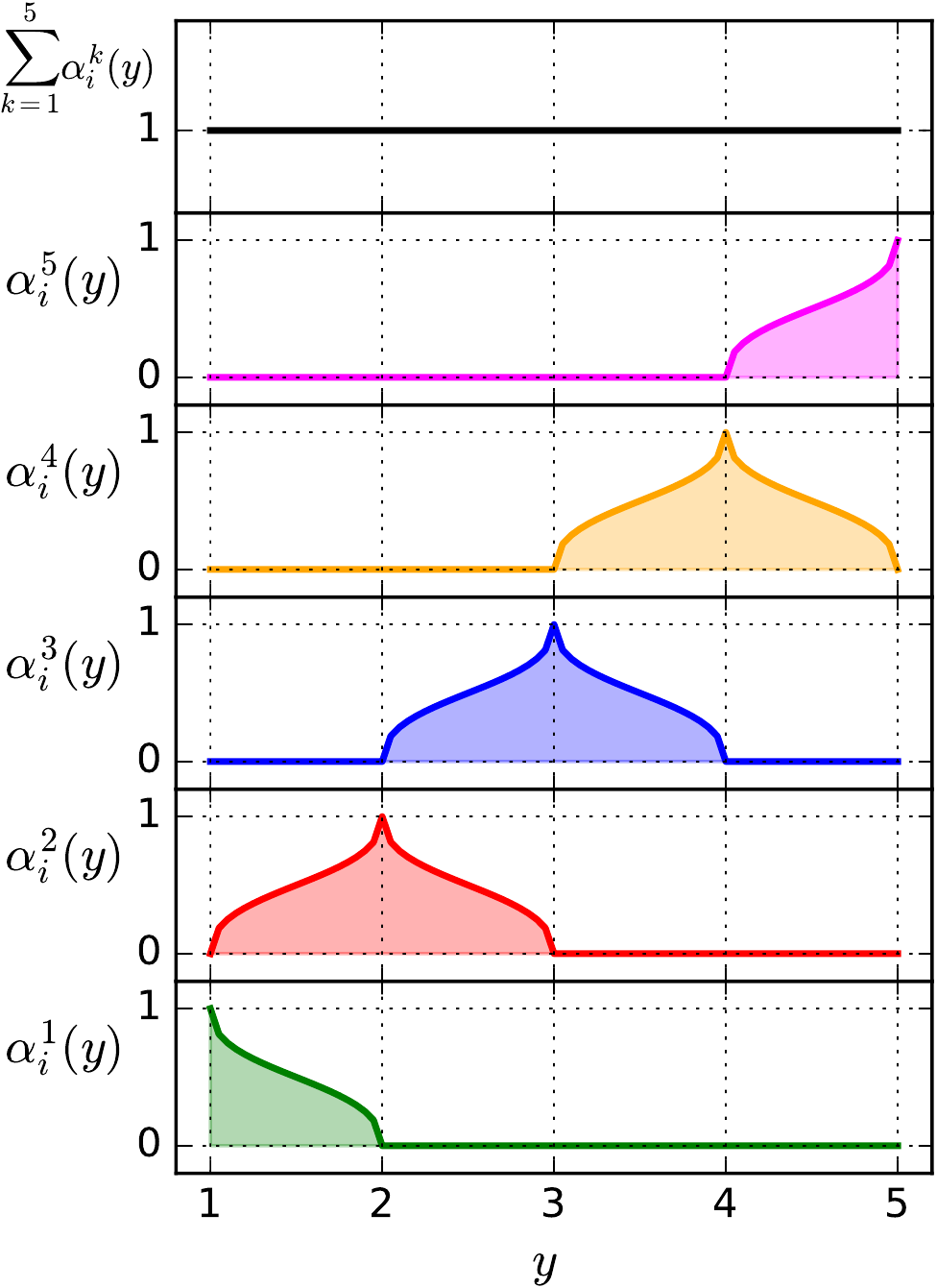}
		\label{fig:alphas_4}
		\end{subfigure}
		
		\vspace{0.2em}
		
		\begin{subfigure}{0.45\textwidth}
		\centering
		\caption{stencil size $=2$, $s=4$, $r=1$}
		\vspace{-0.3em}
		\includegraphics[width=0.8\textwidth]{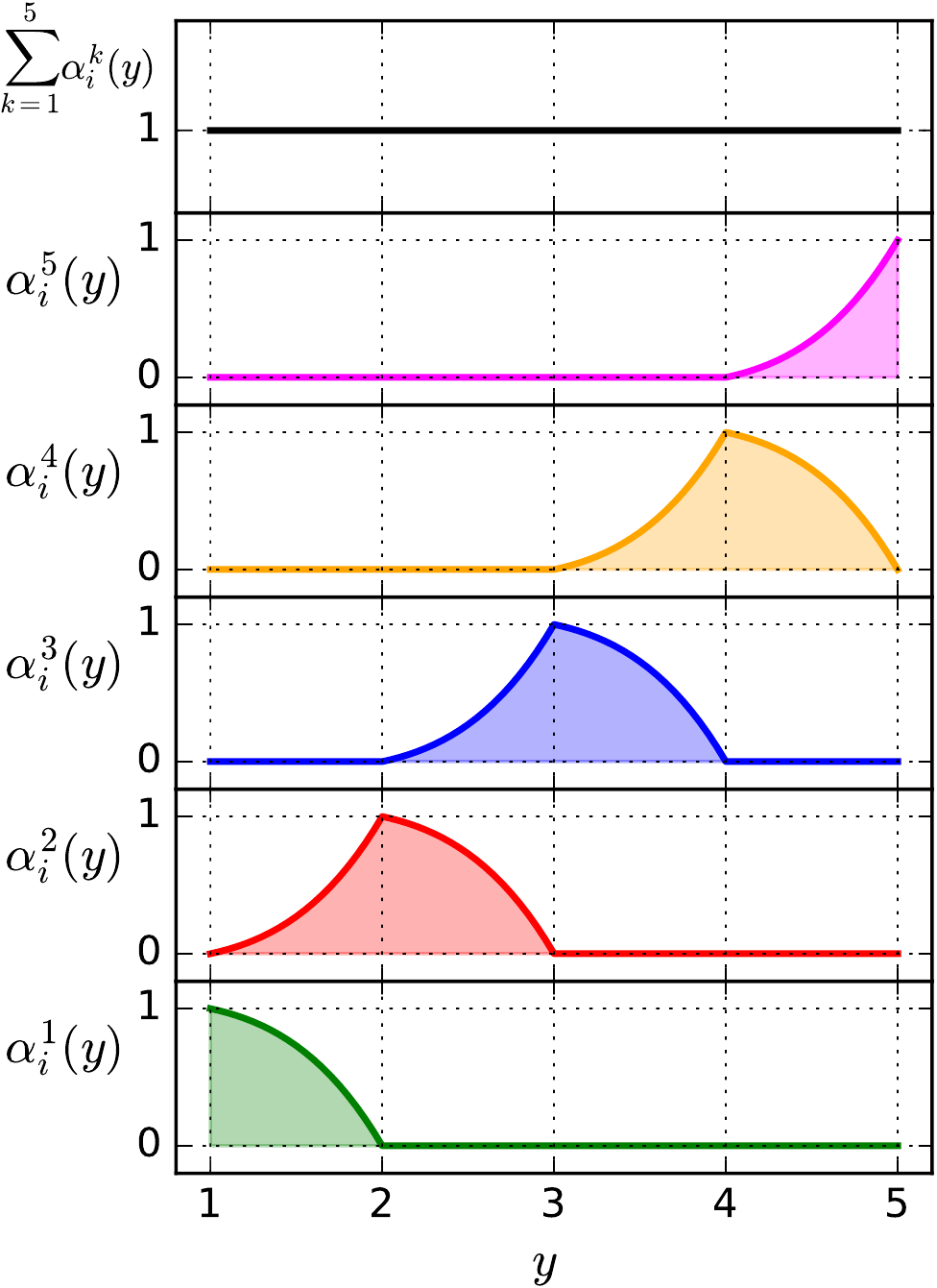}
		\label{fig:alphas_5}
		\end{subfigure} \hspace{2em}
		\begin{subfigure}{0.45\textwidth}
		\centering
		\caption{stencil size $=2$, $s=-2$, $r=1$}
		\vspace{-0.3em}
		\includegraphics[width=0.8\textwidth]{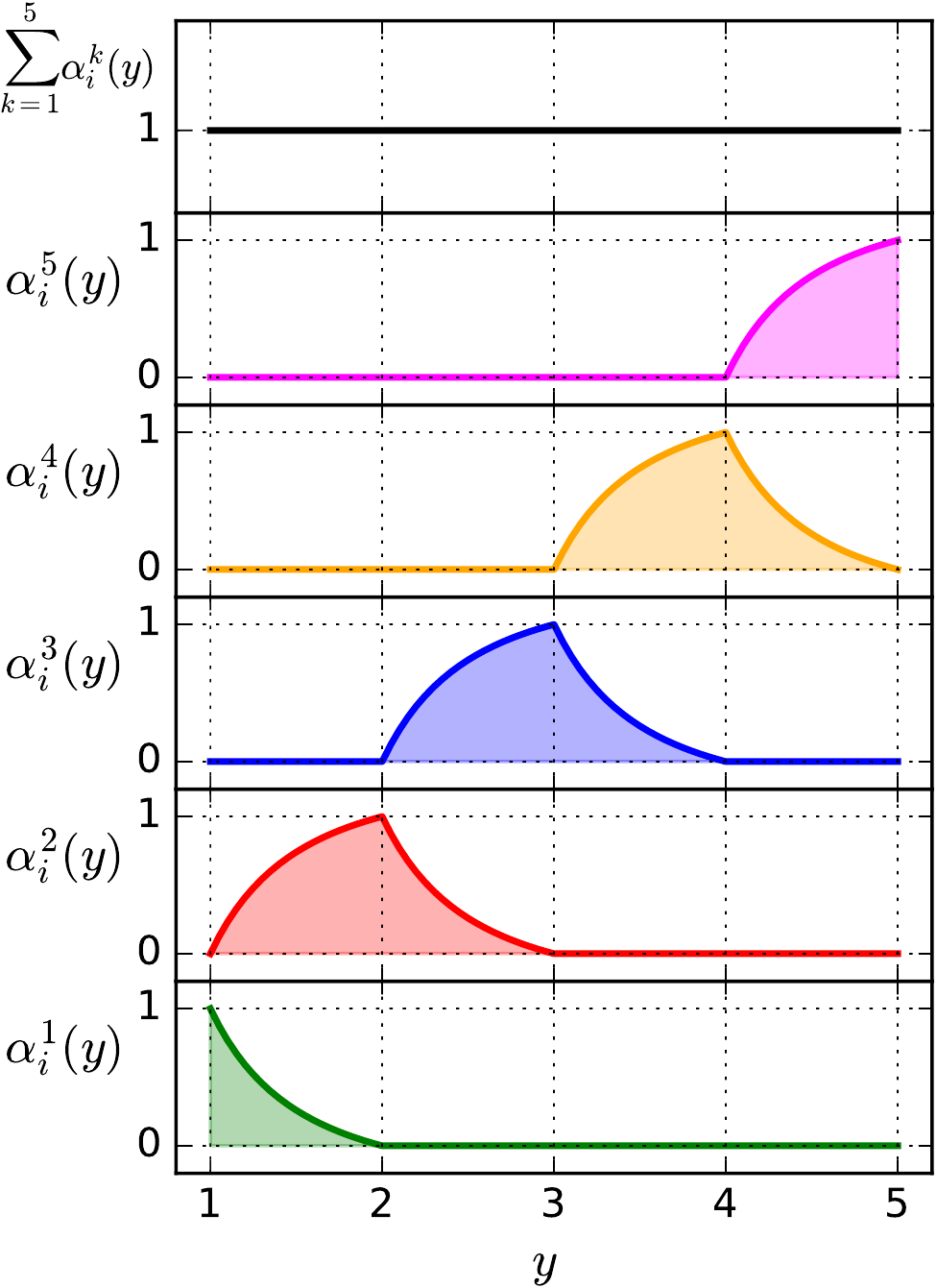}
		\label{fig:alphas_6}
		\end{subfigure}
		\vspace{-1em}
		\caption{Stochastic interpolation coefficients over the domain $\Omega_{C_i}=[1,5]$} 
		\label{fig:Alphas}
	\end{center}
	\end{figure}
	
	The stencil size ($2N$), and the constants $s$ and $r$ are parameters
	of this template.
	Figure \ref{fig:Alphas} shows sets of coefficients $\alpha_i^1,\dots,\alpha_i^p$ generated
	using this template for different values of $s$, $r$ and the stencil size.
	The constant $r$ controls the spread of $\alpha_i^k(y)$
	around the point $y=x^k$. The spread is linear for 
	$r=1$, and reduces with increasing values of $r$. It is 
	interesting to note that for very large values of $r$ 
	(when the stencil size is $2$ and $s=1$)
	the value of $\alpha_i^k(y)$ is nearly $1$ in the range $(x^k-0.5,x^k+0.5)$
	and the interpolation approaches a simple nearest-neighbour rounding.
	The constant $s$ controls the skew. The functions are symmetric for $s=1$,
	show a right-skew for $s>1$ and a left skew for $s<1$.
	The template thus provides a simple means of varying the 
	shape of the stochastic interpolation coefficients.
	In the next section, we demonstrate how this 
	can be used to tune the final interpolation curves
	for a specific example.

\section{Application to Discrete-time Queues}\label{sec:ApplicationToQueues}
	
	We now present concrete examples for the application of the embedding technique
	to discrete-time queues.
	We use the following notations and assumptions throughout this section:
	Time is divided into unit-sized slots.  
	Jobs arrive into the system near the beginning of a slot and 
	depart towards the end of the slot. At most one job can arrive within a single slot.
	The queue state is measured at the end of a slot, as illustrated in Figure \ref{fig:DiscreteTime}. 
	$S_0$ denotes the initial state and $S_t$ denotes the state measured at the end of slot $t$.
	For job arrivals with geometrically distributed inter-arrival times (denoted Geo) 
	the probability of a job arriving in a slot is denoted as $p$. 
	For geometrically distributed (Geo) service times, the probability of the server
	finishing a job in a slot is denoted as $q$. This probability is independent of 
	the number of slots for which a job has already received service.
	For a deterministic server (denoted D), the number of slots taken
	to process a job is $T$.
	\begin{figure}[h]
	\begin{center}
	\includegraphics[width=0.8\textwidth]{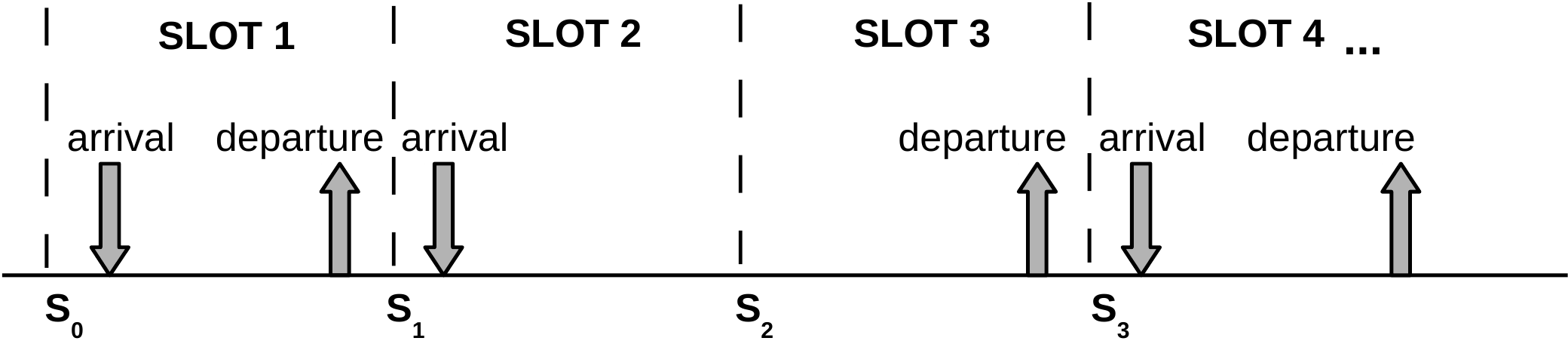}
	\caption{A Discrete-time Queue}
	\label{fig:DiscreteTime}
	\end{center}
	\end{figure}

	\subsection{Embedding Queue-Capacity}
	Consider a Geo/Geo/1 queue with finite buffering. 
	The queue state $S_t$ is the number of jobs in the queue at the end of slot $t$. 
	The queue has a capacity parameter $C\in\mathbb{N}$ that we wish to embed
	into a continuous space. We first define the behavior of the queue with 
	respect to $C$ as follows:
	\begin{definition}
	A job arriving in slot $t$ is allowed to enter the queue 
	if $S_{t-1} < C$, else the job is lost.
	\end{definition}
	
	By defining the capacity parameter in this way, we ensure that
	the Markov chains corresponding to every possible value of $C$ share the
	same state-space. The states $\{S_t\mid S_t > C\}$ are transient
	and unreachable from states $\{S_t\mid S_t\leq C\}$. As an example,
	Figure \ref{fig:MarkovChains_C} shows 
	the Markov chains for $C=1$ and $C=2$.
	\begin{figure}[h]
	\begin{center}
	\includegraphics[width=0.8\textwidth]{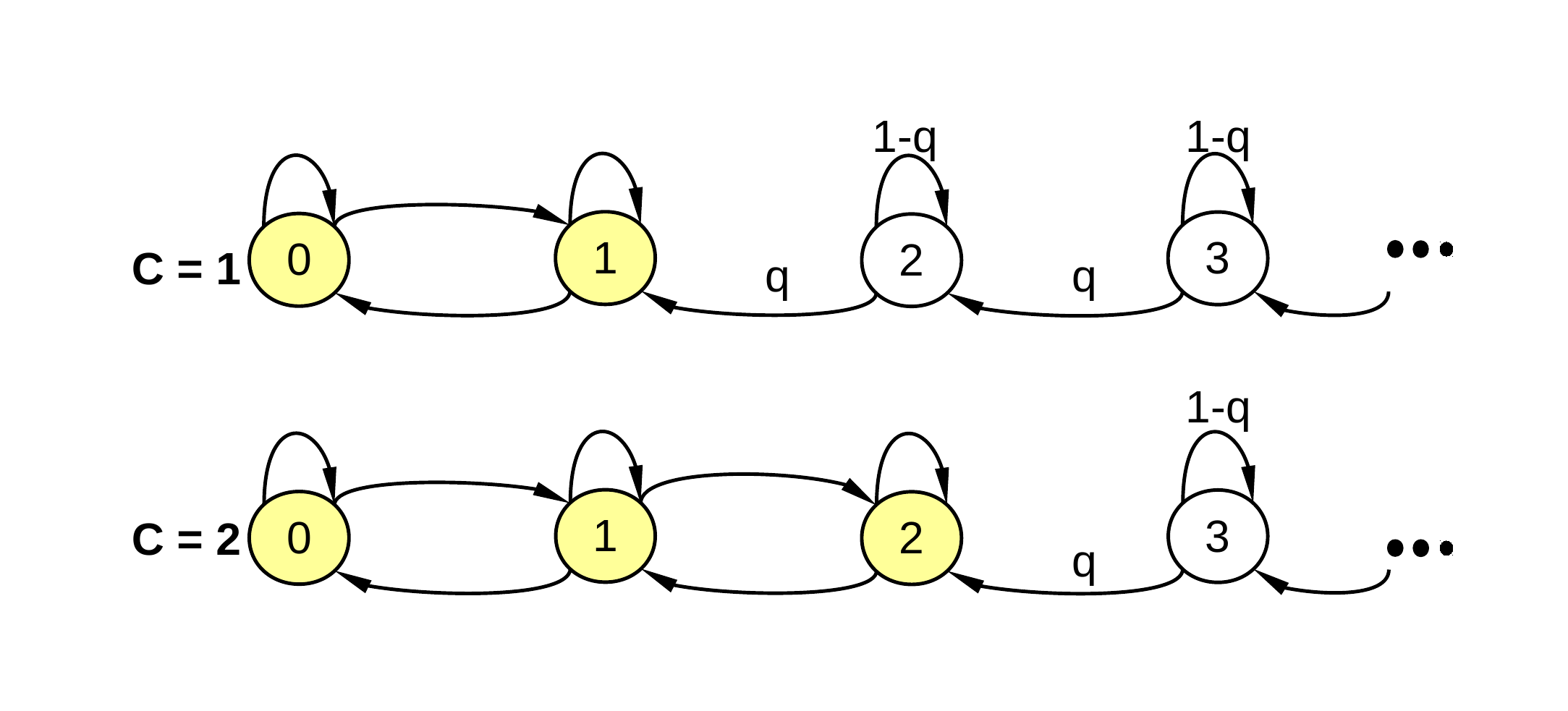}
	\caption{Markov chains for the finite capacity Geo/Geo/1 queue. (The non-shaded states
	are unreachable from the shaded states)}
	\label{fig:MarkovChains_C}
	\end{center}
	\end{figure}
	Let $f:\mathbb{N}\rightarrow\mathbb{R}$ be some long-run average measure of this system
	expressed as a function of the capacity parameter $C$.
	We wish to embed the capacity parameter into continuous space, and evaluate the
	interpolation $\hat{f}(y)$ of $f$ at some given point $y\in\mathbb{R}_{\ge 1}$.
	To do so, we construct a randomized model where the capacity parameter
	is perturbed at the beginning of each slot and assigned the value
	of the random variable $\gamma_c(y)$ with the distribution:
	$$\gamma_c(y)= k {\rm \ with\ probability\ } \alpha^k(y)\qquad{\rm for\ } k\in\{1,2,3,\dots\},$$
	where the coefficients $\alpha^k$ satisfy the conditions described in Section \ref{sec:AGeneralRandomizaionScheme}.
	Let $C_t$ denote the instantaneous value
	of the capacity parameter for the duration of slot $t$. 
	By the definition of the capacity parameter above, whenever the 
	parameter is updated the jobs already present in the queue
	are not disturbed. The updated value of the parameter
	is used solely to decide if a new job should
	be accepted into the queue. 
	Thus it is possible that $S_t > C_t$ for some values of $t$.

	In Figure \ref{fig:GG1_C}, we show the simulation results obtained with this
	randomization scheme. The interpolation coefficients $\alpha^k$
	are generated using the template described in Section \ref{sec:GeneratingAlphas}
	with the spread and skew attributes set to 
	$r=1$ and $s=-1$. For all simulation results presented in this
	section we have used a 2-point stencil.
	We fix the arrival and service probabilities $p,q$ and sweep the
	queue capacity parameter $y$ in steps of $0.05$.
	The interpolated function $\hat{f}(y)$ is the blocking probability
	(probability of an arriving job being denied entry into the system).
	Each point in the plot is the mean value of $\hat{f}(y)$ measured using $100$
	simulation samples with distinct randomization seeds. 
	The shaded area around the plot represents the $\pm 3$ standard-deviation interval.
	\begin{figure}[h]
		\centering
		\includegraphics[width=0.8\textwidth]{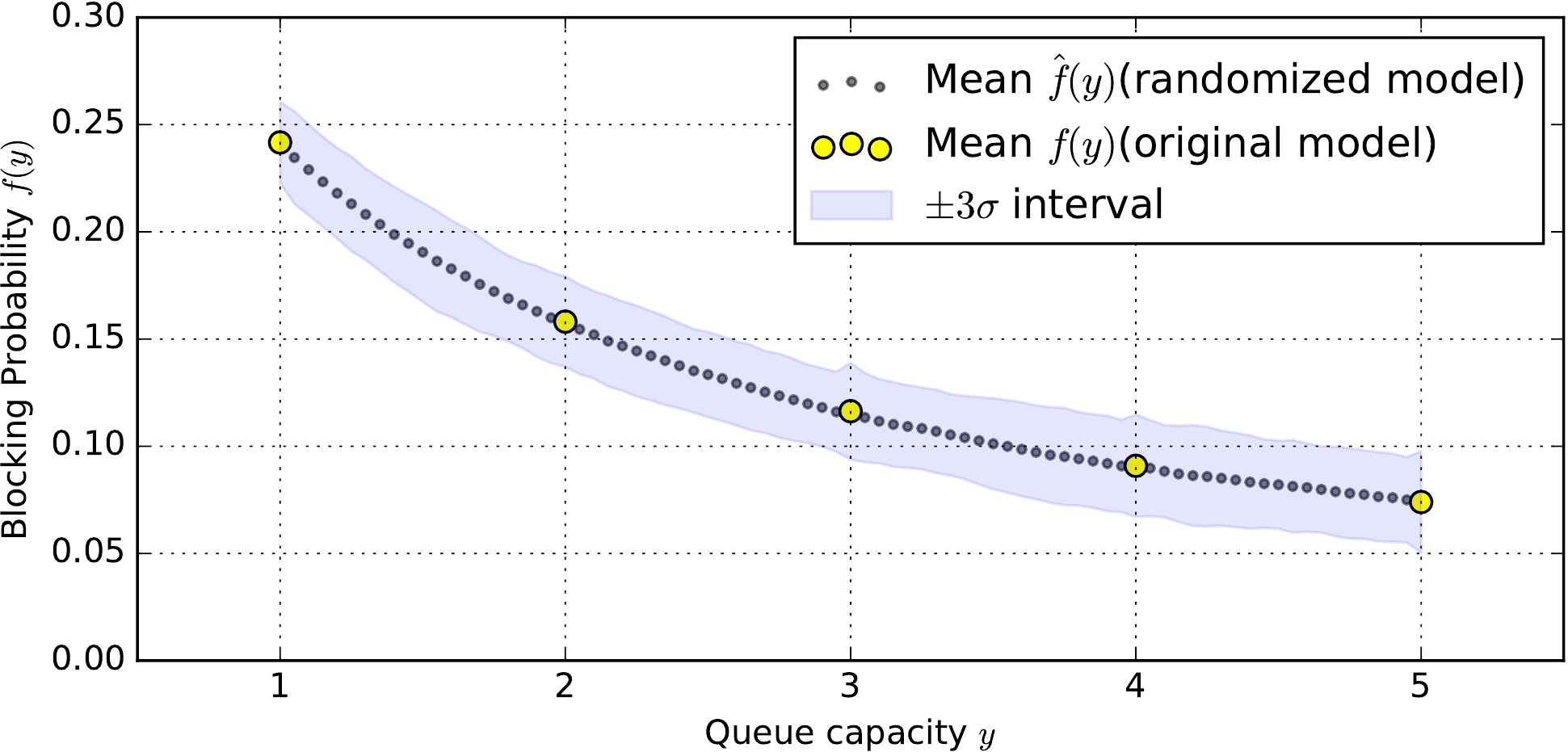}
		\caption{Blocking probability $\hat{f}(y)$ versus the queue capacity $y$
		in a finite capacity Geo/Geo/1 queue ($p=0.5$, $q=0.51$, simulation length$=10^4$ slots)}
		\label{fig:GG1_C}
	\end{figure}

	We observe that the randomization scheme produces a smooth interpolation
	and the standard deviation values (indicating simulation error)
	are similar at the discrete and the interpolated points. 
	The computational overhead of the embedding
	contributed primarily by the additional calls to a random number generator,
	was between $5\%$ to $10\%$. 
	The overhead was computed as the relative difference between 
	the time per simulation for the original discrete-parameter model (at some $y=y_0\in\mathbb{N}$)
	and the randomized model (at $y_0+0.5$). Thus a smooth embedding of the
	queue capacity parameter could be obtained through a randomization of the
	simulation model.

	\subsection{Tuning the Interpolation}
	The interpolation is sensitive to the shape of the 
	coefficient function $\alpha^k(y)$ used during randomization.
	To demonstrate this, we consider the problem of embedding
	the queue capacity parameter in a finite capacity Geo/D/1 queue. This queue
	is similar to the Geo/Geo/1 queue except that the server is deterministic
	with a fixed service time of $T$ slots.
	The queue capacity parameter is embedded using a randomization scheme
	identical to that of the Geo/Geo/1 case using the template described
	in Section \ref{sec:GeneratingAlphas}. 
	We show the simulation results and the resulting interpolation
	obtained using different values of the skew ($s$) and spread ($r$) parameters
	in Figure \ref{fig:GD1_C}. 
	\begin{figure}[h!]
		\centering
		\includegraphics[width=1\textwidth]{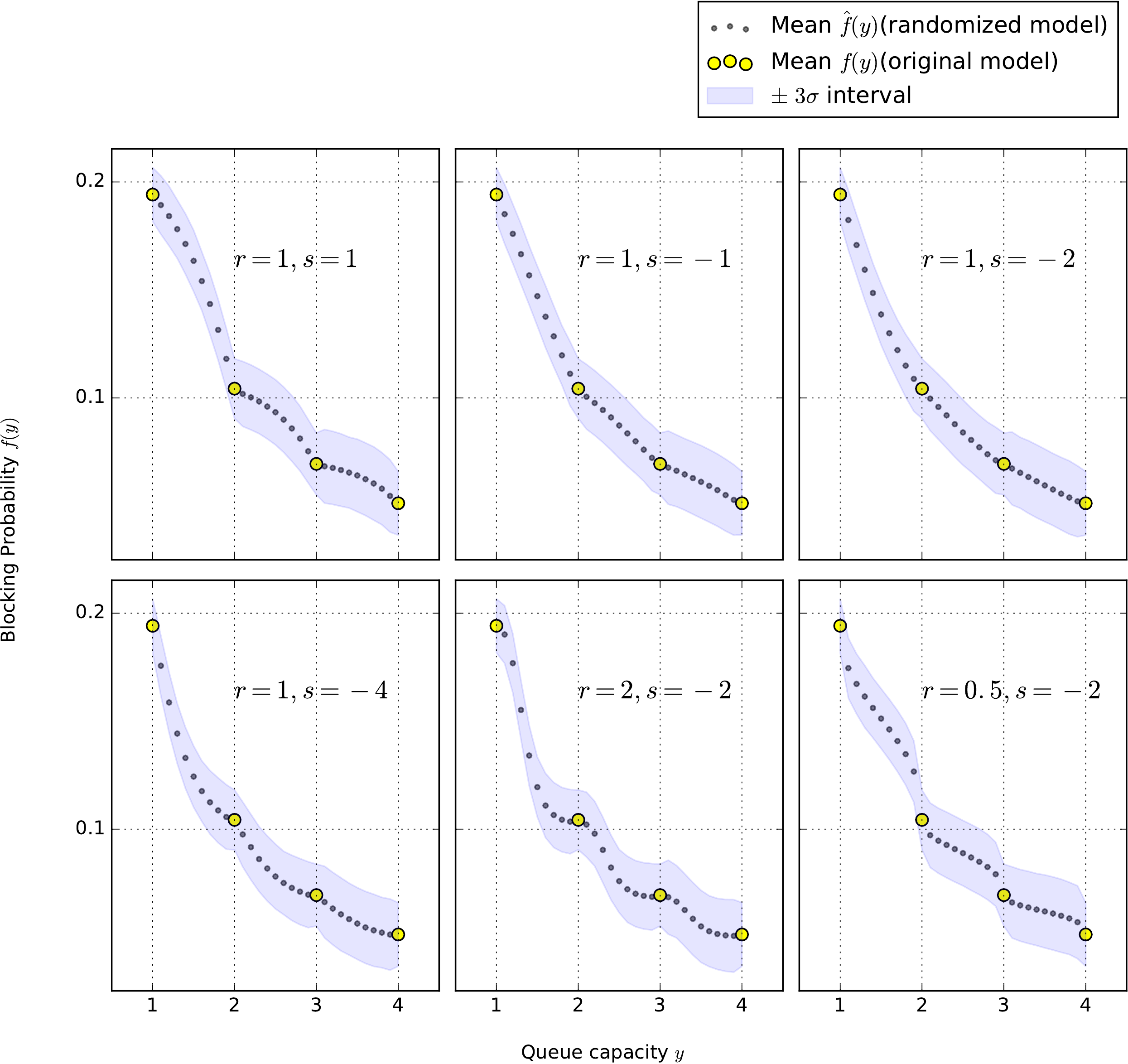}
		\caption{Interpolation results obtained for different values of the 
		template parameters $s$ and $r$. The objective function $\hat{f}(y)$
		is the blocking probability for the queue capacity $y$
		in a finite capacity Geo/D/1 queue ($p=0.49$, $T=2$, simulation length$=10^4$ slots)}
		\label{fig:GD1_C}
	\end{figure}

	We observe that for $r=1,s=1$ the interpolation has kinks at the integer points.
	If the original objective function $f$ is integer-convex
	over some convex subset $\Omega_{D_s}$ of the domain $\Omega_D$, it is desirable that the interpolation 
	$\hat{f}$ also be convex over the region that forms the convex-hull of $\Omega_{D_s}$.
	Otherwise a continuous optimizer applied directly over $\hat{f}$
	can get stuck in a local minimum that does not correspond to any 
	minimizer of $f$ in $\Omega_{D_s}$. Further, it is also
	desirable that the interpolation $\hat{f}$ be smooth
	so that gradient-based continuous optimizers can converge fast.
	Considering these criteria, a reasonably good interpolation was obtained
	at $r=1,s=-2$.
	For some systems, it may be possible to arrive at the best values
	for the randomization parameters ($s,r$) analytically. This is
	an interesting problem, however beyond the scope of the current work.
	For examples presented henceforth, we have tuned the interpolation 
	curves along each parameter axis by varying the template parameters.

\subsection{Embedding the Number of Servers}

	Consider a Geo/Geo/K queue. The queue has infinite buffering
	and $K$ identical, independent servers working in parallel.
	We wish to embed the parameter $K\in \mathbb{N}$ into
	continuous space. To do so, we first re-define the
	system behavior as follows:
	
	\begin{definition}
	The system consists of a single, infinitely fast 
	{\em controller} with a parameter $K$, and a fixed large number ($>K$) 
	of identical, independent servers. In each time-slot, the controller pulls 
	jobs from the head of the queue and assigns each job to a free server, as long as 
	the queue is not empty and the number of jobs currently receiving service is less than $K$.
	\end{definition}

	Note that for a fixed (integer) value of $K$, this description is identical to a queue 
	with $K$ independent servers. However, the new definition of the queue behavior
	makes it intuitive for the controller parameter $K$ to be updated dynamically.
	To embed $K$ into continuous space and evaluate the interpolation
	at some point $y\in\mathbb{R}_{\ge 1}$, we construct a
	randomized model where $K$ is perturbed at the beginning of 
	each slot and assigned the value of the random variable $\gamma_k (y)$.
	By the definition above, whenever the parameter $K$ is
	updated, the jobs already receiving service are not disturbed and the
	updated parameter value is used solely for deciding if service can
	commence for new jobs.

	In Figure \ref{fig:GGK_K}, we show the interpolation obtained 
	using this scheme with $s=1$ and $r=1$.
	We fix the arrival and service probabilities $p,q$ and sweep the
	parameter $y$ in small steps (the step-size is chosen to be
	smaller near the knee region). 
	The interpolated function $\hat{f}(y)$ is the average number
	of jobs in the system. Each point in the plot is the mean value of 
	$\hat{f}(y)$ measured using $100$ simulation samples.
	\begin{figure}[t]
		\centering
		\includegraphics[width=0.75\textwidth]{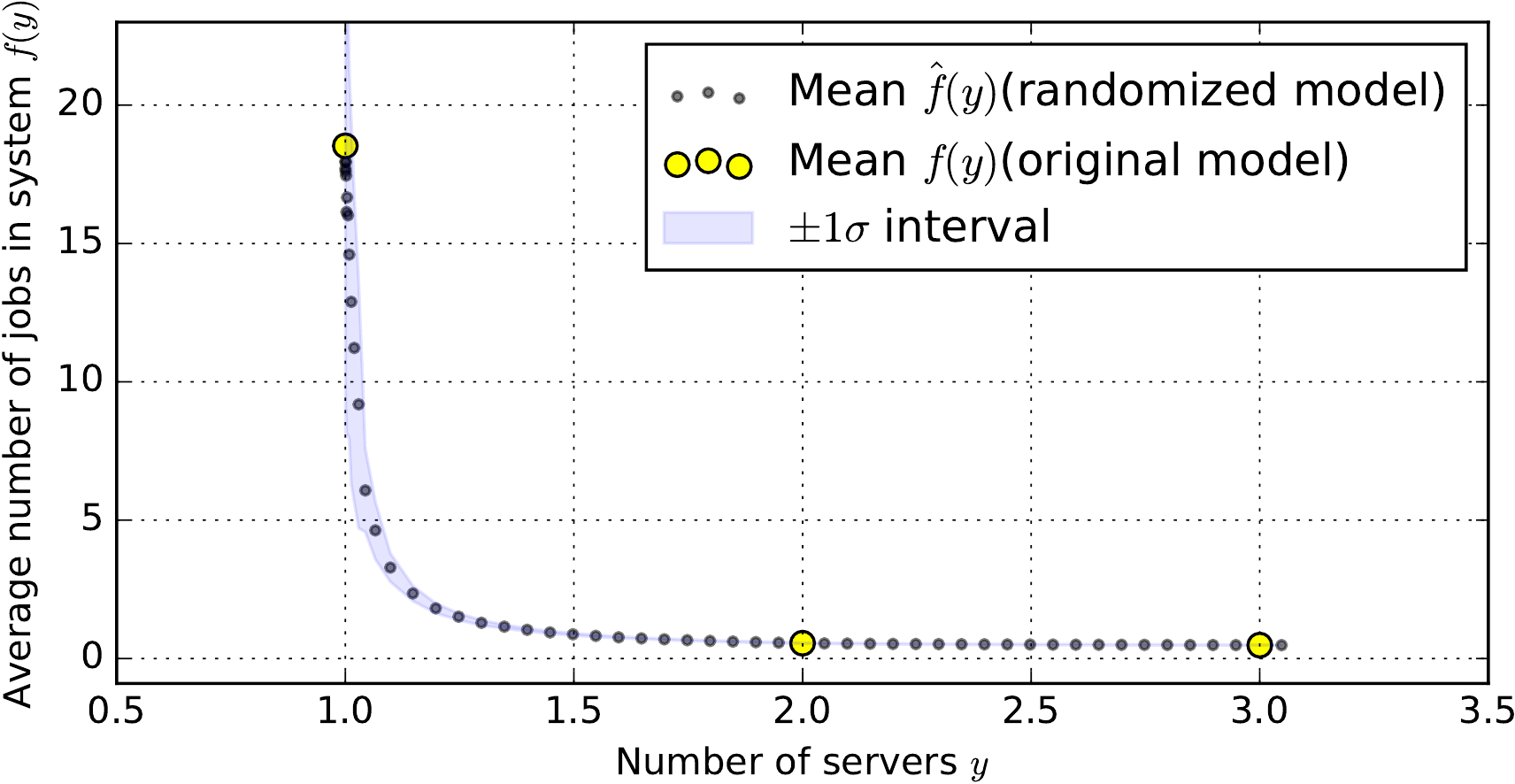}
		\caption{Average number of jobs in the system $\hat{f}(y)$ versus the number of servers ($K=y$)
		in a Geo/Geo/K queue ($p=0.5$, $q=0.51$, simulation length$=10^4$ slots)}
		\label{fig:GGK_K}
	\end{figure}
	\begin{figure}[h]
		\centering
		\includegraphics[width=0.75\textwidth]{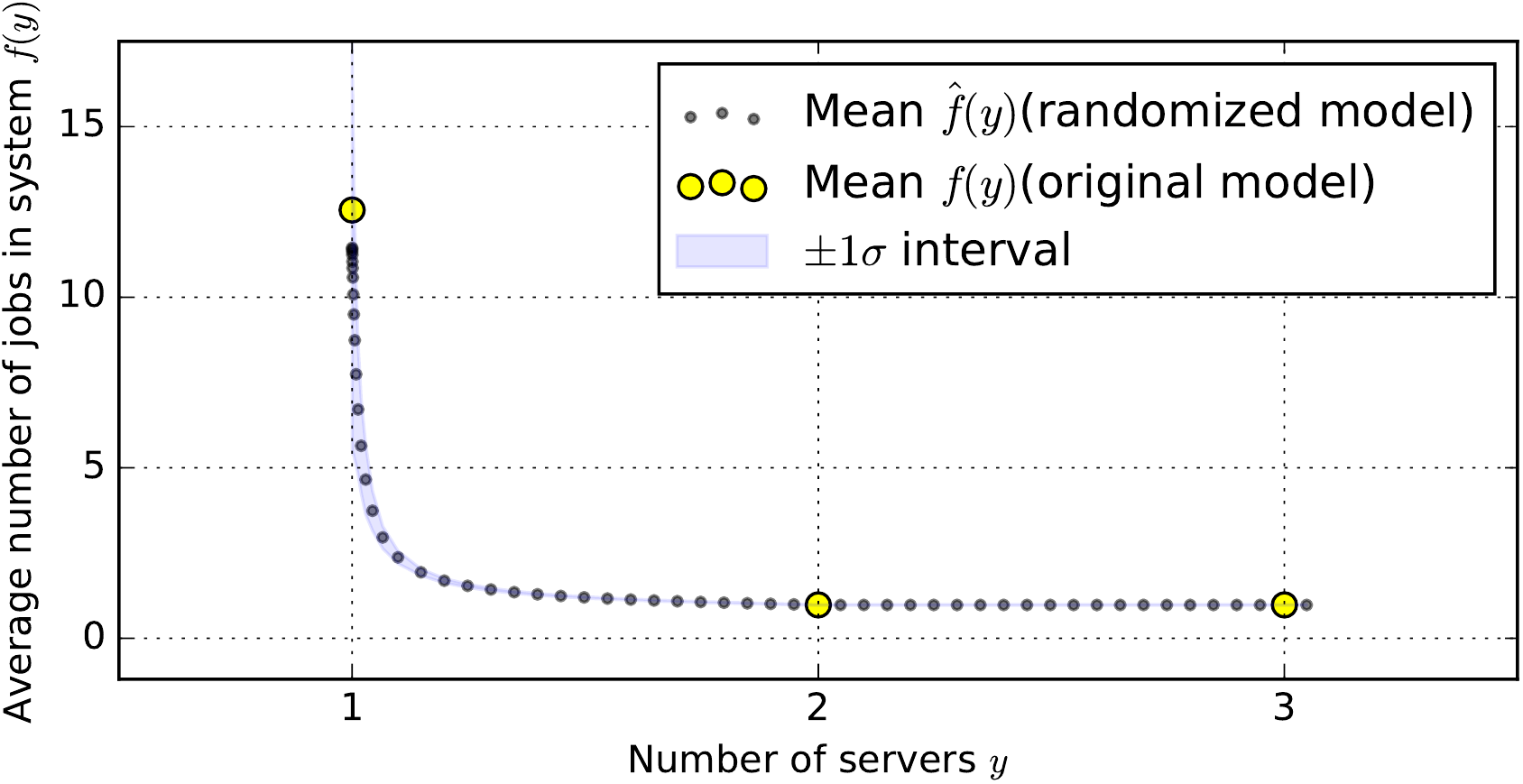}
		\caption{Average number of jobs in the system $\hat{f}(y)$ versus the number of servers ($K=y$)
		in a Geo/D/K queue ($p=0.49$, $T=2$, simulation length$=10^4$ slots)}
		\label{fig:GDK_K}
	\end{figure}
	In Figure \ref{fig:GDK_K} we show the interpolation results
	for the same randomization scheme for a Geo/D/K queue 
	(with a deterministic service time of $T$ slots).

\subsection{Embedding Service Time}

	Consider a Geo/D/1 queue. The server is deterministic
	with a fixed service time of $T\in\mathbb{N}$ slots.
	To embed the parameter $T$ into continuous space, 
	we first define the server behavior as follows:
	
	\begin{definition}
	The server has a parameter $T$. At the end of each slot,
	the server ends jobs that have already received $\ge T$ slots of service.
	\end{definition}
	
	To embed $T$ into continuous space and evaluate the interpolation
	at some point $y\in\mathbb{R}_{\ge 1}$, we construct a
	randomized model where $T$ is perturbed at the beginning of 
	each slot and assigned the value of the random variable $\gamma_T(y)$.
	In Figure \ref{fig:GD1_T}, we show the interpolation obtained 
	using this randomization scheme with $s=1$ and $r=1$.
	We fix the arrival probability $p$ and sweep the service time
	parameter $y$ in steps of 0.05.
	The interpolated function $\hat{f}(y)$ is the average number
	of jobs in the system. Each point in the plot is the mean value of 
	$\hat{f}(y)$ measured using $100$ simulation samples. The randomization
	produces a smooth interpolation of $f$.
	\begin{figure}[t]
		\centering
		\includegraphics[width=0.75\textwidth]{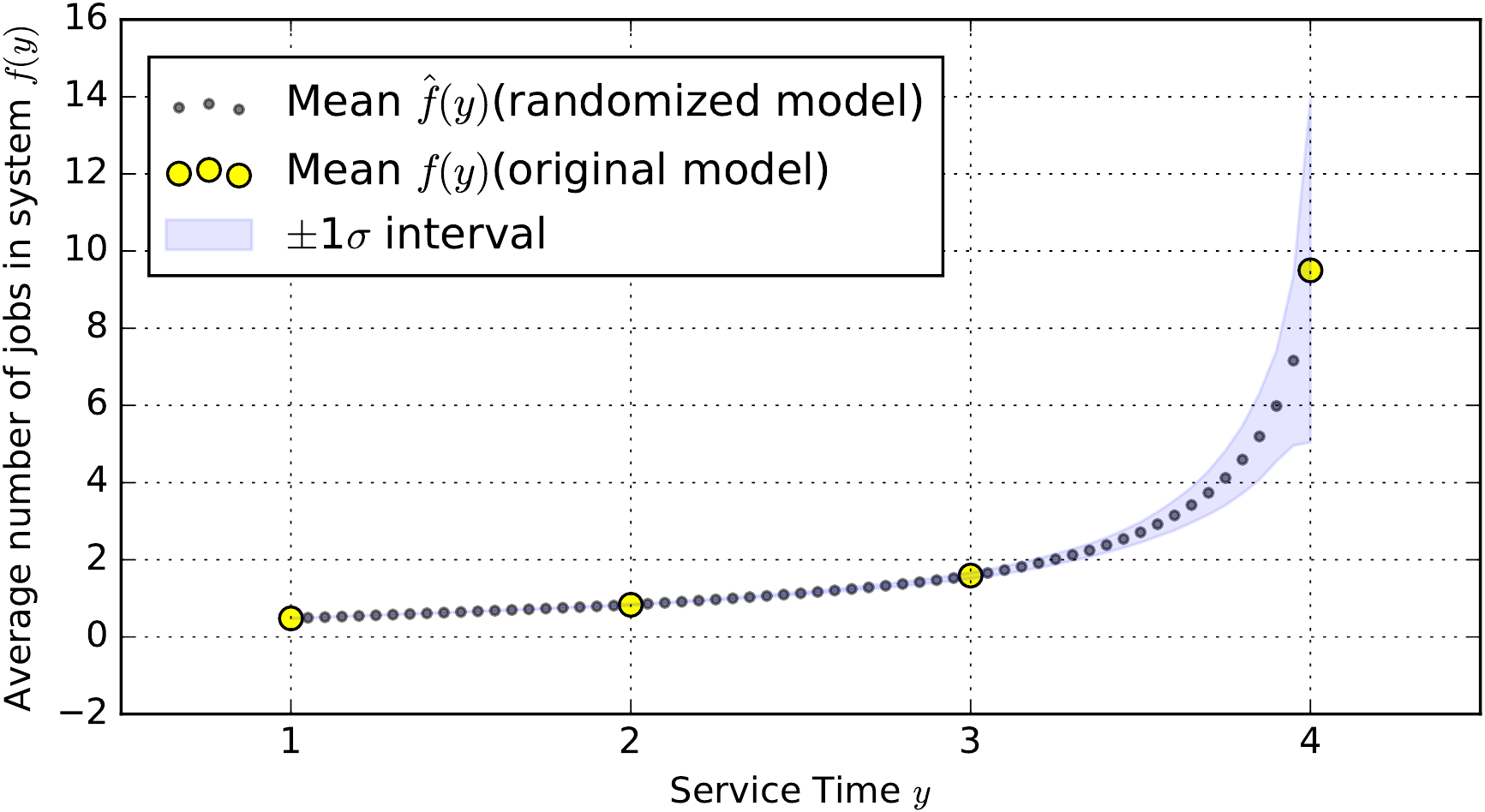}
		\caption{Average number of jobs in the system $\hat{f}(y)$ versus the service time ($y$)
		in a Geo/D/1 queue ($p=0.24$, simulation length$=10^4$ slots)}
		\label{fig:GD1_T}
	\end{figure}

	Using the approach described in this section, 
	multiple discrete parameters in a model can be embedded
	simultaneously and independently of each other, and existing 
	continuous optimizers can be effectively applied over such an embedding.
	We demonstrate this using an optimization case study in the following section.

\section{An Optimization Example}\label{sec:OptimizationCaseStudy}
	
	\subsection{Problem Statement}
	To demonstrate the utility of the embedding technique,
	we consider the optimization of a queueing network
	in Figure \ref{fig:OptimizationProblem}.
	\begin{figure}[tbh]
		\centering
		\includegraphics[width=0.8\textwidth]{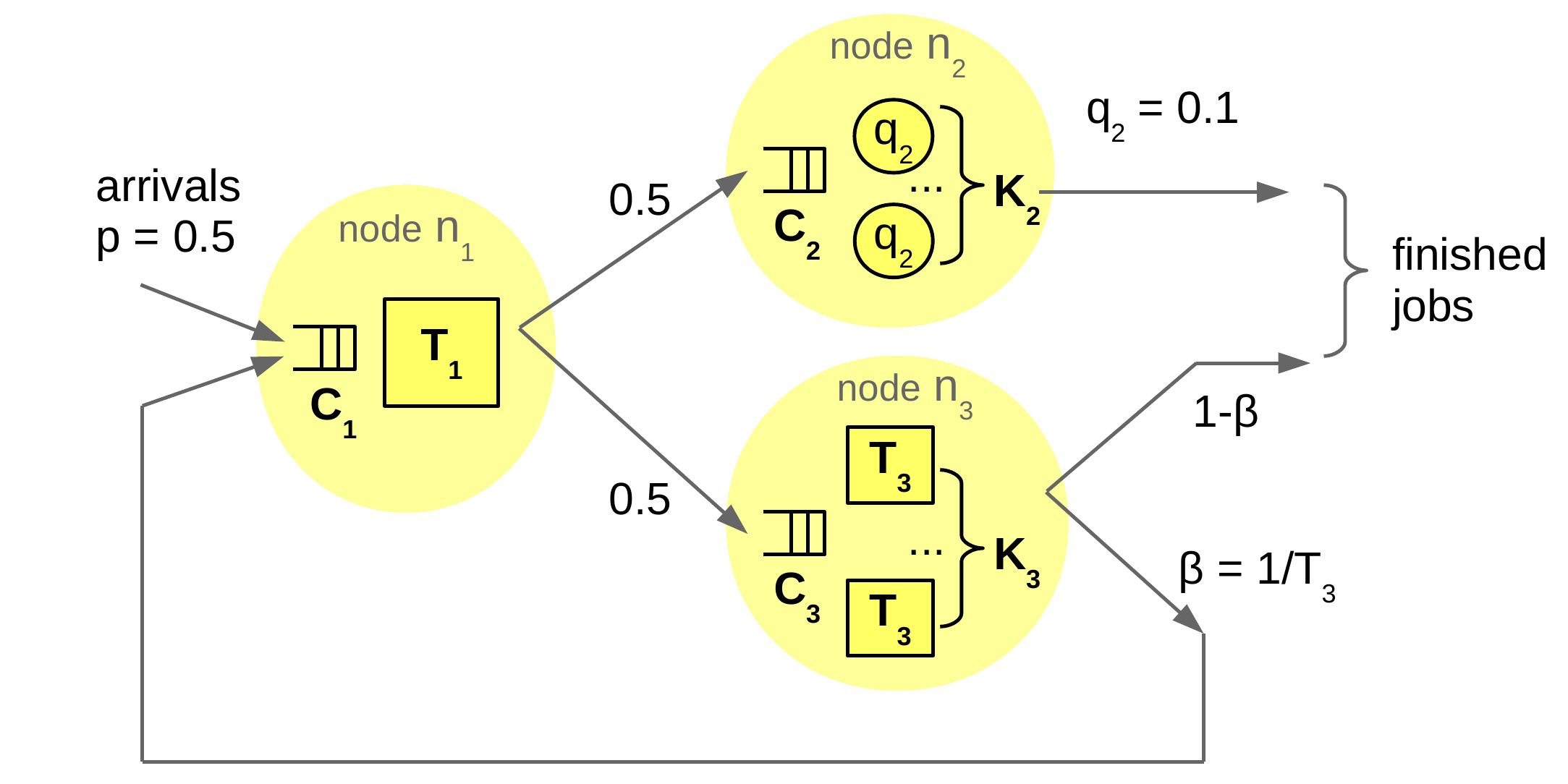}
		\caption{Queueing network to be optimized}
		\label{fig:OptimizationProblem}
	\end{figure}
	\begin{denseItemize}
	
	\item The system consists of three nodes $n_1$, $n_2$ and $n_3$.
	Each node $n_i$ has a queue with a finite capacity $C_i$.
	
	\item Jobs arrive at $n_1$ with geometrically distributed inter-arrival times
	(with an arrival probability $p$). An arriving job that finds the queue full is lost.
	The node $n_1$ consists of a single deterministic server with a service time of $T_1$ slots.
	This server forwards each job to either $n_2$ or $n_3$ with equal probabilities,
	and stalls if the destination queues are full. 

	\item Nodes $n_2$ and $n_3$ respectively consist of $K_2$ and $K_3$ identical servers
	working in parallel. The servers in $n_2$ have 
	geometrically distributed service times (with service probability $q_2$)
	whereas those in $n_3$ are deterministic, with a service time of $T_3$ slots.

	\item The servers in $T_3$ are prone to faults.
	The probability of a job turning out faulty (denoted $\beta$)
	is inversely related to the service time ($\beta=1/T_3$).
	A job that has received faulty service is sent back to node $n_1$ 
	to be re-processed as a fresh job. If the 
	destination queue at $n_1$ is full, the corresponding
	server in $n_3$ stalls.
	\end{denseItemize}

	The parameter set for this system is $\{C_1,C_2,C_3,T_1,T_3,K_2,K_3\}$.
	Each parameter can take integer values between $1$ and $10$.
	The arrival probability $p$ and the service parameter $q_2$
	are kept fixed ($p=0.5$, $q_2=0.1$).
	Let $X$ denote the vector of parameter values
	and $\Omega_D$ denote the set of all possible values
	that $X$ can take. Let $T(X)$ denote the expected value of the long-run average 
	throughput of the system (estimated using simulation). 
	The throughput can be improved directly by
	improving the value of each parameter (that is, increasing the buffer sizes
	and the number of servers and reducing the service times).
	However, in most real-world applications improving a parameter value
	would also incur a certain cost. To model this trade-off
	we define a synthetic cost function $C(X)$ whose value increases
	with increasing buffer sizes and the number of servers and
	reduces with increasing service-times as follows:
	\begin{equation}\label{eq:cost}
	C(X) =  (C_1+C_2+C_3) + \frac{20}{T_1} + 100  K_2 + 20 \frac{K_3}{T_3}.
	\end{equation}
	The objective function to be minimized is the weighted sum of the 
	normalized cost and throughput components:
	\begin{equation}\label{eq:objective}
	 f(X) = \frac{C(X)}{\underset{j \in\Omega_D}{\max}\ {C(j)}} - \frac{T(X)}{p}
	\end{equation}
	Since an exhaustive evaluation of $f$ over all $10^7$ design points
	is infeasible, our goal is to find the best solution possible
	within a fixed computational budget.
	
	\subsection{Randomization-based Embedding}

	To solve this optimization problem, we first embed the discrete
	parameter space into a continuous one by
	randomizing each parameter in the model 
	using the scheme described in Section \ref{sec:ApplicationToQueues}.
	The randomization settings (listed in Table \ref{tbl:randomization_settings})
	were chosen via a rough tuning of the interpolation curves along each parameter axis.
	\begin{table}[h]
	\centering
	\renewcommand{\arraystretch}{0.8}
	\begin{tabular}{|c c||c c||c c|}
		\hline
		Parameter & $r$, $s$ & Parameter & $r$, $s$ & Parameter & $r$, $s$\\\hline
		$C_1$ & 1, -2 & $T_1$ & 1, 1 & $K_2$ & 1, 4\\\hline
		$C_2$ & 1,  1 & $T_3$ & 1, 1 & $K_3$ & 1, 1\\\hline
		$C_3$ & 1, -2 &  &  &  & \\\hline
	\end{tabular}	
	\caption{Randomization settings (stencil size = 2 for all parameters)}
	\label{tbl:randomization_settings}
	\end{table}
	Figure \ref{fig:ObjectiveContours} shows the resulting interpolation
	plotted along arbitrary two-dimensional slices of the 7-dimensional 
	parameter space.
	The plots were obtained by sweeping two parameters at a time (in steps of $0.25$) 
	while keeping the other parameter values fixed. 
	The plots indicate that in the observed regions of the domain,
	the interpolation is reasonably smooth
	and suited to the application of descent-based optimization methods.
	\begin{figure}[ptbh]
		
		\begin{subfigure}{0.5\textwidth}
		\centering
		\includegraphics[width=1\textwidth]{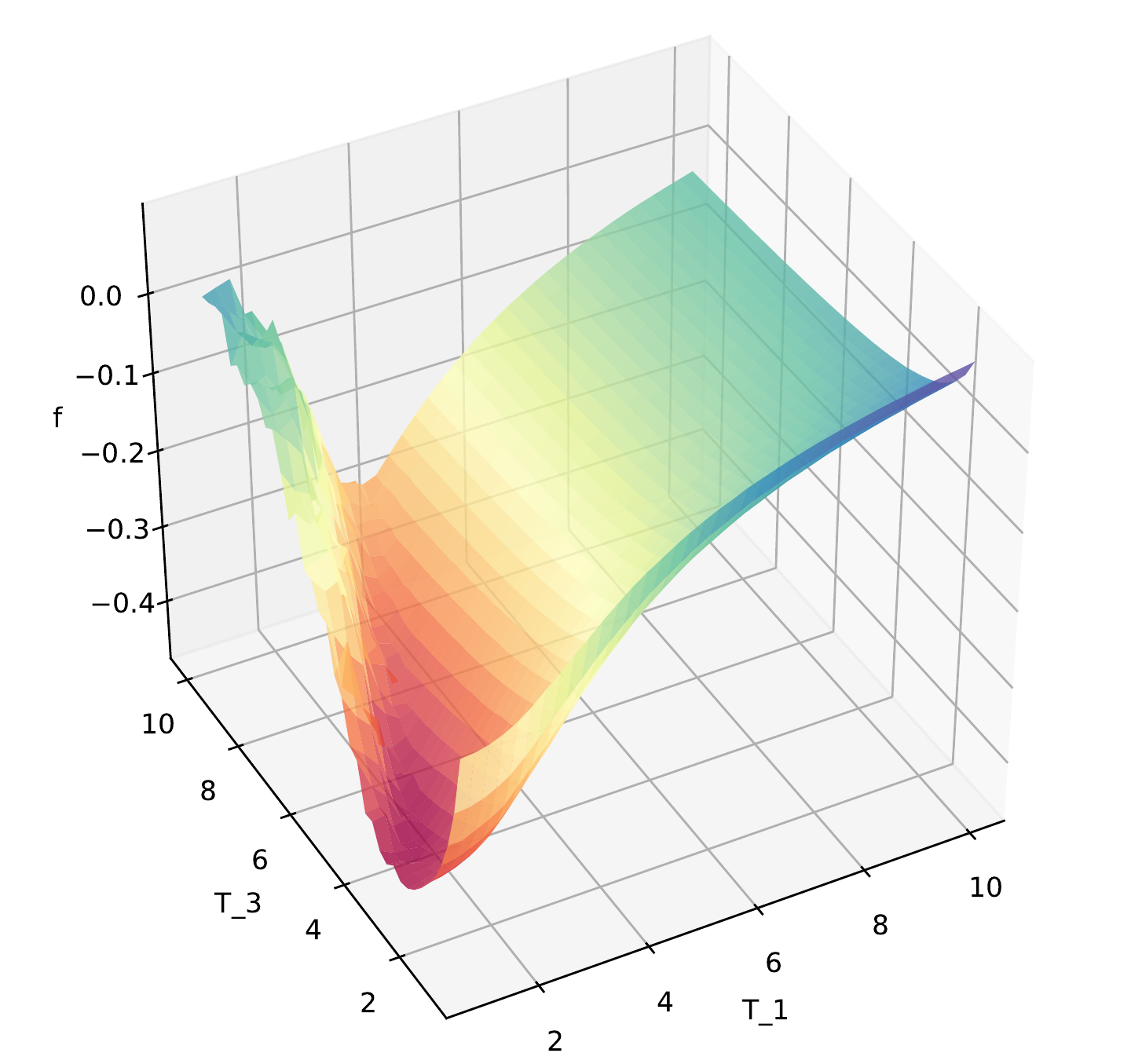}
		\caption{$f(T1,T3)$}
		\label{fig:Opt_T1_T3}
		\end{subfigure}
		\begin{subfigure}{0.5\textwidth}
		\centering
		\includegraphics[width=1\textwidth]{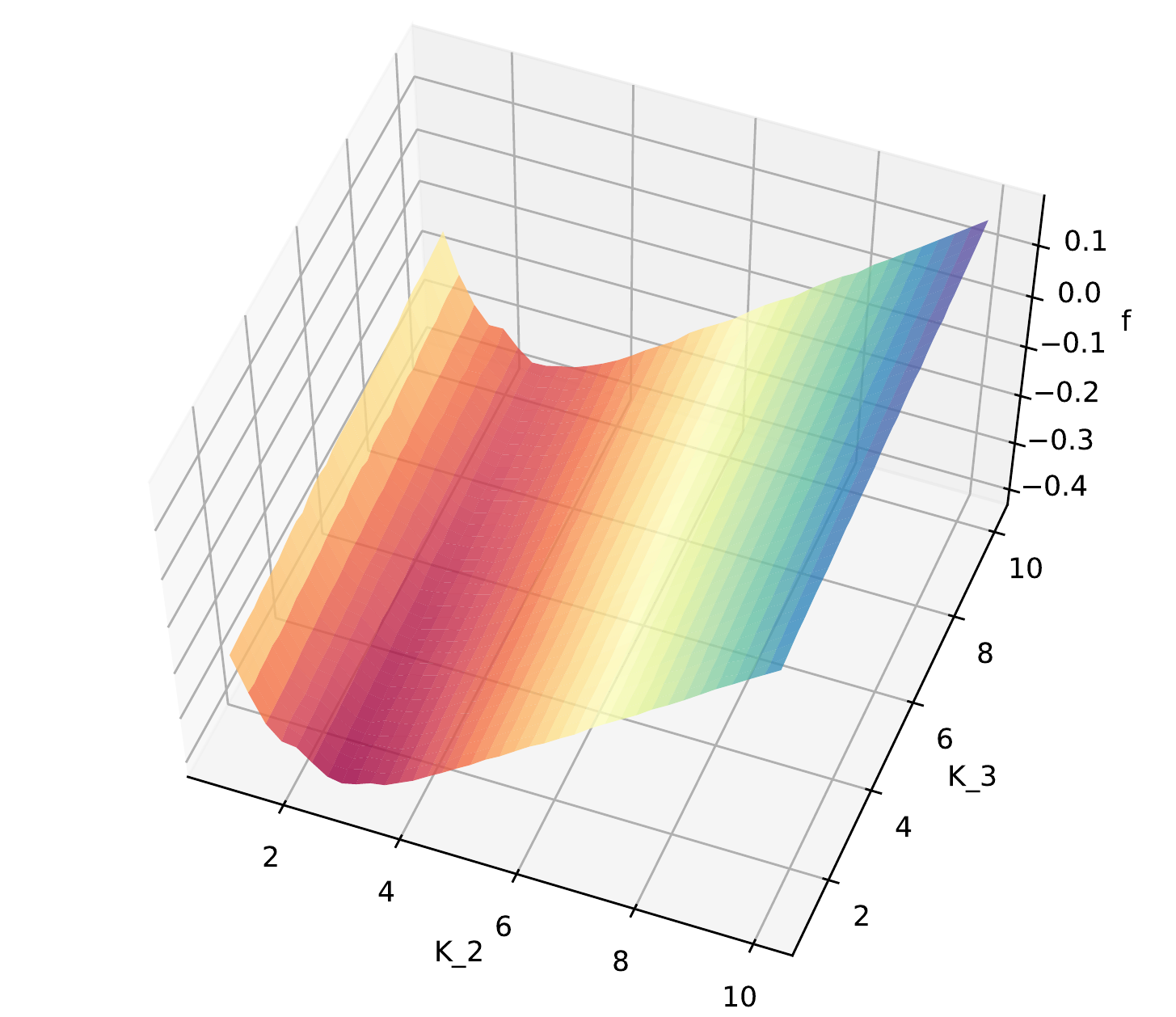}
		\caption{$f(K2,K3)$}
		\label{fig:Opt_K2_K3}
		\end{subfigure}

		\begin{subfigure}{0.5\textwidth}
		\centering
		\includegraphics[width=1\textwidth]{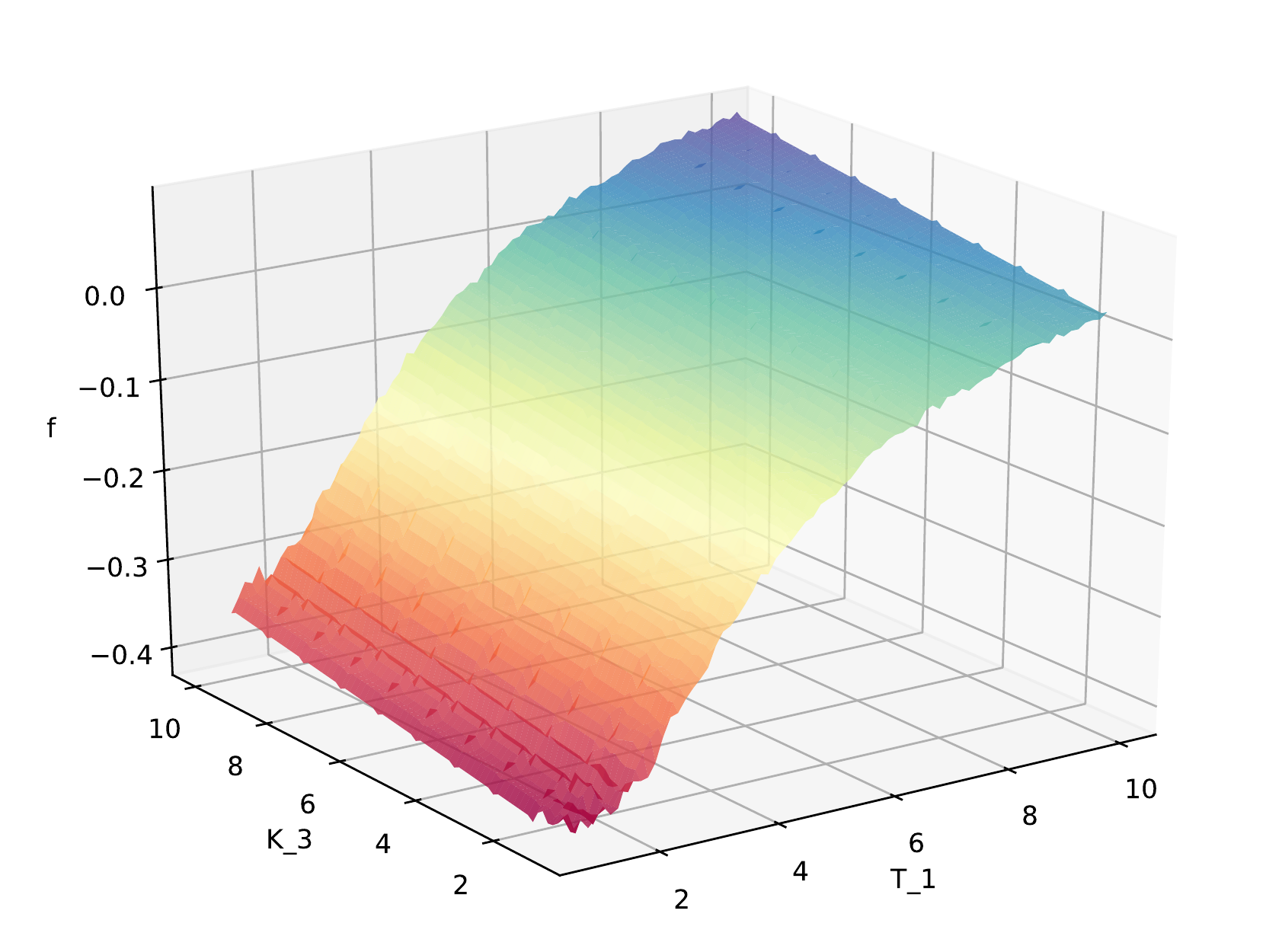}
		\caption{$f(T1,K3)$}
		\label{fig:Opt_T1_K3}
		\end{subfigure}
		\begin{subfigure}{0.5\textwidth}
		\centering
		\includegraphics[width=1\textwidth]{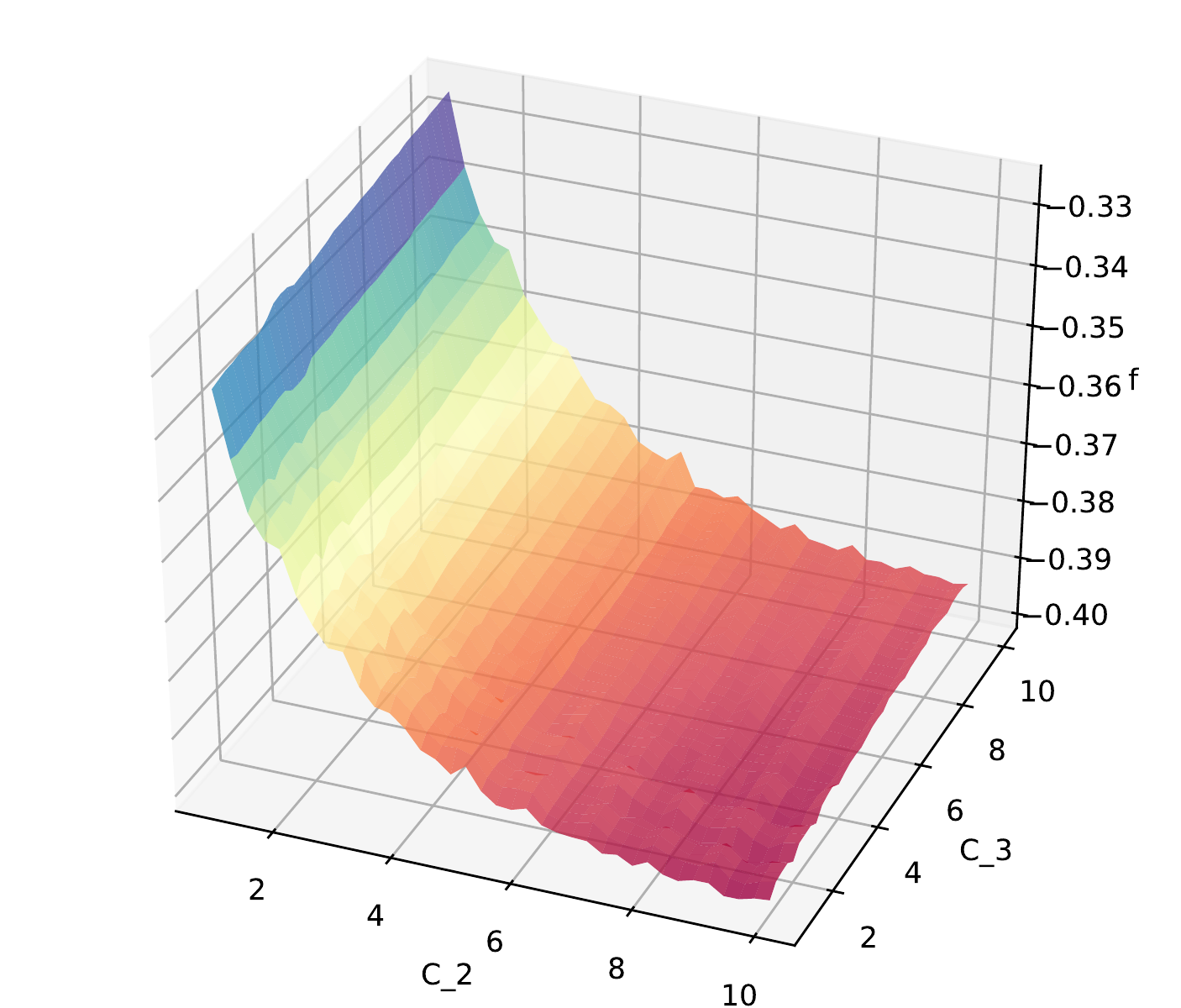}
		\caption{$f(C2,C3)$}
		\label{fig:Opt_C2_C3}
		\end{subfigure}

		\caption{The interpolated objective function 
		plotted along arbitrary 2D slices of the domain.
		Each point on the plot is the objective measured 
		using a single simulation (of length $10^4$ slots) 
		of the randomized model.}
		\label{fig:ObjectiveContours}
	\end{figure}

	In Table \ref{tbl:computational_overhead} we list the 
	computational overheads contributed 
	by the additional calls to the random number generator
	as successive parameters in the model are embedded.
	The overheads were measured relative to the point $X=(5,5,5,5,5,5,5)$
	by successively randomizing each parameter and
	assigning to it the value $5.5$.
	The values in Table \ref{tbl:computational_overhead} 
	show that the overheads are a fraction of the total simulation time.
	This is in contrast to spatial interpolation methods
	where the computational cost of interpolation
	grows as integer multiples of the simulation time.
	\begin{table}[tbh]
	\centering
	\begin{tabular}{|c|r|r|}
	\hline
	
	\textbf{\begin{tabular}[c]{@{}c@{}}Number of\\  parameters \\ embedded\end{tabular}} 
	& \multicolumn{1}{c|}{\textbf{\begin{tabular}[c]{@{}c@{}}Avg real-time\\ per simulation$^*$\\ (seconds)\end{tabular}}} 
	& \multicolumn{1}{c|}{\textbf{Overhead}} \\ \hline
	
	0   & 0.2173   & 0          \\ \hline
	1   & 0.2294   & 5.59 \%    \\ \hline
	2   & 0.2304   & 6.06 \%    \\ \hline
	3   & 0.2302   & 5.96 \%    \\ \hline
	4   & 0.2447   & 12.65 \%   \\ \hline
	5   & 0.2598   & 19.58 \%   \\ \hline
	6   & 0.2701   & 24.31 \%   \\ \hline
	7   & 0.2880   & 32.53 \%   \\ \hline
	\multicolumn{3}{|l|}{\begin{tabular}[c]{@{}l@{}}$^*$ computed across 10 simulation runs (of length $10^6$ slots each)\\ with distinct randomization seeds\end{tabular}} \\ \hline
	\end{tabular}
	\caption{Computational overhead of the randomization-based embedding \\
	for the 7-parameter queueing network in Figure \ref{fig:OptimizationProblem}}
	\label{tbl:computational_overhead}
	\end{table}
		
	\subsection{Optimization}

	Over the randomization-based embedding, we apply 
	two continuous optimizers: 
	COBYLA\cite{Powell1994,COBYLA} 
	and SPSA\cite{Spall1992}.
	Both optimizers avoid the need for an explicit
	computation of numerical derivatives along each parameter axis and are
	thus suited to high dimensional problems. 
	Further, both methods are suited to problems
	where the objective function evaluations can be noisy.
	COBYLA is a trust-region based method.
	It builds an estimate of the gradient at each step by 
	interpolation at the vertices of a simplex in the parameter space.
	SPSA is a descent-based method that 
	approximates the gradient at each step using only
	two objective function evaluations regardless of the dimensionality
	of the parameter space.

	While discrete-parameter variants of SPSA exist\cite{Hill_Discrete_SPSA_2001},\cite[Chapter 9]{Bhatnagar2013},
	COBYLA cannot be applied in the discrete-parameter case.
	This fact illustrates the utility of our embedding technique,
	which makes it possible to apply existing continuous optimizers
	directly to solve discrete-parameter simulation optimization problems.
	We compare the performance of both these optimizers against 
	a simple discrete-space variant of SPSA \cite{Hill_Discrete_SPSA_2001}
	where each objective function evaluation is restricted to 
	points in the original discrete domain.
	While the focus of this work is on the embedding
	technique rather than specific optimization methods,
	we present the comparison as an  
	evaluation of the embedding technique's utility.

	We use an implementation of COBYLA from Python's SciPy
	library\cite{PythonCOBYLA} with the settings $\rho_{\rm beg}=5.0$ and $\rho_{\rm end}=0.1$.
	For SPSA, we use an implementation from the NoisyOpt library\cite{Noisyopt} with the 
	default step-size schedules (suggested in \cite{Spall1998}). The step-size parameter 
	is identical between the continuous and discrete SPSA variants.
	For each method we perform $100$ optimization runs 
	using distinct, randomly chosen initial points. 
	The set of initial points is fixed and is 
	common across all three optimization methods.
	For each optimization run we set an upper limit of $1000$ objective
	evaluations. At each objective function evaluation, the system throughput
	is measured using a single simulation of the randomized model
	(of length $10^4$ slots) and the cost is computed analytically
	using Equation (\ref{eq:cost}).
	At the end of each optimization run, we round the solution
	to the nearest integer point, and record the objective value
	at this point.
	
	%
\begin{table}[h]
\centering
\begin{tabular}{|c|c|c|c|c|}
\hline
\multicolumn{2}{|l|}{}                                                                                                                       & \textbf{COBYLA} & \textbf{SPSA} & \textbf{Discrete-SPSA} \\ \hline
\multirow{3}{*}{\textbf{\begin{tabular}[c]{@{}c@{}}Objective value\\ at the optimum\\ (lower is better)\end{tabular}}}   & \textbf{best}     & -0.7130         & -0.7108       & -0.6842                \\ \cline{2-5} 
                                                                                                                         & \textbf{avg}      & -0.5240         & -0.1994       & -0.1964                \\ \cline{2-5} 
                                                                                                                         & \textbf{std-dev}  & 0.2930          & 0.3481        & 0.3358                 \\ \hline
\multicolumn{2}{|c|}{\textbf{\begin{tabular}[c]{@{}c@{}}Avg number of objective\\ function evaluations\\ per optimization run\end{tabular}}} & 52.7            & 1000          & 1000                   \\ \hline
\multicolumn{2}{|c|}{\textbf{\begin{tabular}[c]{@{}c@{}}Avg time per optimization\\ run (seconds)\end{tabular}}}                             & 0.15            & 2.94          & 2.33                   \\ \hline
\end{tabular}
\caption{Performance of COBYLA and SPSA (applied over a randomization-based embedding) and Discrete-SPSA applied directly over the discrete parameter model.}
\label{tbl:OptimizationResults}
\end{table}

	Table \ref{tbl:OptimizationResults} presents the 
	performance of the three methods
	measured across $100$ optimization runs. 
	We observe that COBYLA shows the best performance, both in terms
	of the quality of the solutions and the convergence rate.
	The continuous-space SPSA performs better in comparison
	to its discrete-parameter variant.
	The solutions are well-clustered. Among the top 
	$20$ solutions found by COBYLA, all have the parameter values
	$T_1=1$, $T_3=10$, and $K_2=3$. The results indicate that 
	existing continuous optimizers can be effectively
	applied over the embedding and their
	performance compares favourably 
	against direct discrete-space search.

\section{Conclusions}

	We have presented a simple and computationally efficient
	technique using which discrete parameters in a simulation optimization problem 
	can be embedded into a continuous-space, 
	enabling direct application of continuous-space
	methods for optimization.
	The technique is based on a randomization of the simulation model
	and is applicable to problems where 
	the objective function is a long-run average measure.
	We described in detail the application of this technique 
	to discrete-time queues for embedding queue capacities,
	number of servers and server-delay parameters
	and empirically showed that the technique can produce 
	smooth interpolations. Further, the interpolation 
	can be tuned by varying the shape of the 
	coefficient functions. An analytical means to 
	arrive at the best randomization scheme (rather than through tuning)
	can be an interesting problem for future research.

	Over the randomization-based embedding, 
	existing continuous optimizers can be effectively
	applied to find good solutions.
	We demonstrated this via 
	an optimization case study
	of a queueing network with $7$ discrete parameters. 
	A randomization of the model produced smooth embeddings
	of the objective function with a low computational
	overhead. Two continuous-space
	optimizers (COBYLA and SPSA) applied over this
	embedding showed good performance in comparison to 
	a direct discrete-space search.
	
	In summary, this work established the randomization
	based technique as an efficient means
	of embedding discrete parameter queueing systems
	into continuous space for simulation-based optimization
	over a large number of parameters.
	A detailed comparison across multiple continuous optimizers
	that can be applied over such an embedding could be 
	the subject of future investigation.
	The application of the embedding technique to other
	kinds of systems (such as inventory models) also needs 
	to be investigated further.

	\bibliographystyle{IEEEtran}
\bibliography{IEEEabrv,bibliography_list}
\end{document}